\newcommand\rank{\operatorname{rank}}
\newcommand\diag{\operatorname{diag}}
\newtheorem{definition}{Definition}
\newtheorem{corollary}{Corollary}
\newtheorem{theorem}{Theorem}
\newtheorem{lemma}{Lemma}
\newtheorem{proposition}{Proposition}
\definecolor{myblue}{RGB}{51,51,178}
\definecolor{mygreen}{RGB}{0,128,0}
\definecolor{myred}{RGB}{189,26,26}
\newcommand{\tr}{\mathrm{tr}}
\begin{document}

\title{Distinguishing quantum states using Clifford orbits}

\author{Richard Kueng}
\author{Huangjun Zhu}
\author{David Gross}
\affiliation{Institute for Theoretical Physics, University of Cologne, Germany}

\date{\today}

\begin{abstract}
It is a fundamental property of quantum mechanics that information is lost as a result of performing measurements.
Indeed, with every quantum measurement one can associate a number -- its \emph{POVM norm constant} -- that quantifies how much the distinguishability of quantum states degrades in the worst case as a result of the measurement.
This raises the obvious question which measurements preserve the most information in these sense of having the largest norm constant.
While a number of near-optimal schemes have been found (e.g.\ the \emph{uniform POVM}, or \emph{complex projective 4-designs}), they all seem to be difficult to implement in practice.
Here, we analyze the distinguishability of quantum states under measurements that are orbits of the Clifford group.
The Clifford group plays an important role e.g.\ in quantum error correction, and its elements are considered simple to implement.
We find that the POVM norm constants of Clifford orbits depend on the \emph{effective rank} of the states that should be distinguished, as well as on a quantitative measure of the ``degree of localization in phase space'' of the vectors in the orbit.
The most important Clifford orbit is formed by the set of stabilizer states. 
Our main result implies that stabilizer measurements are essentially optimal for distinguishing pure quantum states.
As an auxiliary result, we use the methods developed here to prove new entropic uncertainty relations for stabilizer measurements.
This paper is based on a very recent analysis of the representation theory of tensor powers of the Clifford group.
\end{abstract}

\maketitle

\section{Introduction and overview}

\subsection{Distinguishing quantum states}

Helstrom's Theorem \cite{helstrom_quantum_1969} gives a precise measure of the distinguishability of quantum sates.
In the setting of the theorem, one considers a process that prepares one of two states $\rho, \sigma$ with equal probability.
The task is then to estimate which of the two states has been prepared at a given instance.
Helstrom found that the optimal strategy is to perform a two-outcome projective measurement, using the projections onto the non-negative range of $\rho - \sigma$ and onto its complement.
This protocol achieves the optimal probability of identifying the state correctly, which is given by
\begin{equation}
  \mathrm{Pr}_{\textrm{Helstrom}} = \frac{1}{2} + \frac{1}{4} \left\|  \rho -  \sigma \right\|_1,
   \label{eq:helstrom_simple}
\end{equation}
where $\|\rho-\sigma\|_1$ is the \emph{trace norm}, that is, the sum of singular values of the difference.

The formula (\ref{eq:helstrom_simple}) nicely mimics the classical situation. 
Here, one assume that a process first picks -- with equal probability -- one of two distributions $p,q$ over some finite alphabet. 
It then draws one letter of the alphabet according to the chosen distribution.
The task is to decide which of the two distributions has been used.
The optimal strategy is given by the \emph{maximum likelihood rule}, where  one decides for $p$ if the observed sample $x$ is such that $(p-q)(x)$ is positive, and for $q$ otherwise.
The answer is correct  with probability
\begin{equation}
  \mathrm{Pr}_{\textrm{ML}} = \frac{1}{2} + \frac{1}{4} \left\|  p -  q \right\|_{\ell_1},
\end{equation}
where the $\ell_1$-norm of a vector is the sum of the absolute values of its elements.

A slight generalization allows for the two hypotheses to occur with probabilities $(\tau, 1-\tau)$ for $\tau$ not necessarily equal to $\frac12$.
In this case, the expressions generalize to 
\begin{eqnarray*}
  \mathrm{Pr}_{\mathrm{Helstrom}} 
  &=& \frac{1}{2} + \frac{1}{2} \left\| \tau \rho - (1-\tau) \sigma \right\|_1,  \\
  \mathrm{Pr}_{\textrm{ML}} 
  &=& \frac{1}{2} + \frac{1}{2} \left\|  \tau p -  (1-\tau) q \right\|_{\ell_1}.
\end{eqnarray*}
This suggests using the optimal \emph{bias}
\begin{equation*}
	\frac 12 \left\| \tau \rho - (1-\tau) \sigma \right\|_1,
	\qquad
	\frac 12 \left\| \tau p - (1-\tau) q \right\|_{\ell_1} 
\end{equation*}
toward the right solution as a quantitative measure of the distinguishability of states or, respectively, distributions with prior probabilities specified by $\tau$.

Note that the measurement that achieves the quantum bound depends on the particular pair of weighted states $\tau \rho, (1-\tau)\sigma$.
A measurement not optimized to distinguish between those two alternatives may perform considerably worse.
It is thus natural to ask whether there are \emph{universal} measurements that perform reasonably well for any pair of states and how to quantify their performance.

To address this question, we adopt the framework of Ref.~\cite{matthews_distinguishability_2009}.
Starting point there is a quantum measurement defined via its POVM elements 
--- i.e.\ a family of positive semidefinite operators $\left\{ M_k \right\}_{k=1}^N$ that constitute a partition of the identity
$
	\sum_{k=1}^N M_k = \mathbb{I}.
$
Born's rule asserts that such a POVM maps a state $\rho$
to a discrete probability vector
\begin{equation}
  \mathcal{M} (\rho) = \sum_{k=1}^N |e_k \rangle \mathrm{tr} \left( M_k \rho \right) \in \mathbb{R}^N.
  \label{eq:POVM_outcome_vector}
\end{equation}
In this language, distinguishing $\rho$ from $\sigma$ using the measurement $\mathcal{M}$ reduces to the task of distinguishing between the distributions $\mathcal{M}(\rho)$ and  $\mathcal{M}(\sigma)$. 
The resulting bias will be
\begin{eqnarray*}
	\frac 12 \left\| \mathcal{M}\big( \tau \rho - (1-\tau) \sigma \big) \right\|_{\ell_1} 
	&\leq&
	\frac 12 \left\| \tau \rho - (1-\tau) \sigma  \right\|_{1} ,
\end{eqnarray*}
where the inequality follows from Helstrom's Theorem.
The worst-case ratio between the two sides of the inequality is quantified by the \emph{POVM norm constant}
\begin{equation}
	\lambda_{\mathcal M} = \inf_{X} \frac{\|\mathcal{M}(X)\|_{\ell_1}}{\|X\|_1}, \label{eq:relation_constant}
\end{equation}
where the infimum is over hermitian matrices $X$. 
If additional information is available -- e.g.\ that $\rho, \sigma$ are of high purity or that they occur with equal probability -- it makes sense to define a restricted norm constant by taking the infimum only over differences of weighted states with the given properties. 
In any case, a large value of $\lambda_{\mathcal{M}}$ means that $\mathcal{M}$ preserves distinguishability well.

A conceptually simple measurement that achieves optimal norm constants
\cite[Theorem 10]{matthews_distinguishability_2009}
is the \emph{uniform POVM} $\mathcal{M}_{\textrm{unif}}$.
It maps states of a $d$-dimensional system to probability distributions on the unit sphere of $\mathbb{C}^d$:
\begin{equation*}
	\mathcal{M}_{\mathrm{unif}}(\rho)(\psi)	
	= C_d\, \tr \left( \rho |\psi\rangle \! \langle \psi| \right),
\end{equation*}
where $C_d$ is a suitable normalization constant.
It fulfills
\cite[Theorem 8]{matthews_distinguishability_2009}
\begin{equation*}
  \|\mathcal{M}_{\textrm{unif}}(\rho - \sigma) \|_{\ell_1} \geq \frac{1}{\sqrt{d}} \left( \sqrt{\frac{2}{\pi}}-o(1) \right) \| \rho - \sigma \|_1.
\end{equation*}
When 
$\rho, \sigma$ are assumed to be pure, this improves \cite{matthews_distinguishability_2009} to the dimension-independent statement
\begin{equation}
  \|\mathcal{M}_{\textrm{unif}}(\rho - \sigma) \|_{\ell_1} 
  \geq 
  \frac12 \| \rho - \sigma \|_1.
\label{eq:uniform_pure}
\end{equation}

Refs.~\cite{ambainis_quantum_2007,matthews_distinguishability_2009,lancien_distinguishing_2013} observed that POVMs constructed from complex projective $4$-designs 
(see Sec.~\ref{sec:designs} for definitions)
already essentially match these bounds.
Subsequently, the same has been shown for randomized constructions of POVMs with $\mathcal{O}(d^2)$ different outcomes \cite{aubrun_zonoids_2016}.
However, arguably, none of these POVMs affords a simple structure that would make them easy to analyze further or implement physically.

\subsection{Main Result}

In this paper, we consider POVMs $\mathcal{M}_{C,z}$ whose elements are orbits $\{ U |z\rangle \! \langle z| U^\dagger\}_{U\in \mathrm{C}_n}$ of a \emph{fiducial state} $|z\rangle \! \langle z|$ under the $n$-qubit Clifford group $\mathrm{C}_n$.
The Clifford group (c.f.\ Sec.~\ref{sec:clifford}) 
plays a crucial role in  quantum computation \cite{Gott97the, GottC99, NielC00book, BravK05}, quantum error correction \cite{Gott97the,NielC00book}, randomized
benchmarking \cite{KnilLRB08,MageGE11,WallF14}, and quantum state tomography with compressed sensing \cite{GrosLFB10,Gros11,KimmL15}. 
Importantly, simple and fault-tolerant gate implementations for all elements of the Clifford group are known
\cite{hostens2005stabilizer}.
Our results build on a recent analysis of the representation theory of the 4th tensor power of the Clifford group \cite{other_paper}.

It turns out that the norm constant of $\mathcal{M}_{C,z}$ depend on a simple measure of the ``degree of localization in phase space'' of the vectors in the orbit. 
To state the measure -- identified in \cite{other_paper} -- set $d=2^n$ and let $W_1, \dots, W_{d^2}$ be the set of $n$-qubit Pauli operators (c.f.\ Sec.~\ref{sec:math_prelimi}).
The \emph{characteristic function} of an $n$-qubit quantum state $\rho$ is 
\begin{equation}
  \label{eq:characteristic_function}
  \Xi (\rho)=  \sum_{k=1}^{d^2} \mathrm{tr} \left( W_k \rho \right) |e_k \rangle \in \mathbb{R}^{d^2}. 
\end{equation}
In analogy to the characteristic function that appears e.g.\ in quantum optics, $\Xi$ can be interpreted as a ``phase space representation'' of the operator $\rho$ \cite{Gros06,WallM94book,gross2008quantum}.
Our bounds depend on the quantity
\begin{equation}\label{eq:alpha}
	\alpha (z) 
	= 
	\frac{1}{d^2} \left\| \Xi (|z \rangle \! \langle z|) \right\|_{\ell_4}^4
	=
	\frac{1}{d^2} \sum_{k=1}^{d^2} \big(\tr W_k | z \rangle \! \langle z|\big)^4.
\end{equation}
The value of $\alpha$ is constant along Clifford orbits and bounded between \cite{other_paper}
\begin{equation}
	\frac{2}{d(d+1)}\leq\alpha(z)\leq \frac{1}{d}.
\label{eq:alpha1_bound}
\end{equation}
Smaller values of $\alpha$ turn out to lead to better norm constants. 
At the same time, the number of non-zero coefficients of the characteristic function is lower-bounded by $1/\alpha$, so that Clifford orbits preserve distinguishability well only if their fiducial vector is associated with a ``spread out'' characteristic function.
With these definitions, our main result reads:

\begin{theorem}[Main Theorem]\label{thm:main}
  Fix $d=2^n$, let $|z\rangle$ be a normalized vector in $\mathbb{C}^d$ and let $\mathcal{M}$ be the Clifford POVM generated by $|z\rangle$.
  Then, for all hermitian $X$, it holds that
  \begin{equation*}
	\|\mathcal{M}( X) \|_{\ell_1} \geq 
	\frac{1}{\sqrt{(6 d \alpha (z) r_{\mathrm{eff}}(X) + 10)r_{\mathrm{eff}} (X)}}\, \| X \|_1,
  \end{equation*}
  where $r_{\mathrm{eff}}(X) = \frac{ \| X \|_1^2}{\| X \|_2^2} \leq \mathrm{rank}(X)$ is the \emph{effective rank}.
\end{theorem}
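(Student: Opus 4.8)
\noindent The plan is to run the standard moment method for lower-bounding a POVM norm constant, the one new ingredient being the recently determined fourth moment operator of the Clifford group. Write $M_U=(d/N)\,U|z\rangle\!\langle z|U^\dagger$ for the $N$ POVM elements of $\mathcal M$ (the prefactor $d/N$ is forced by $\sum_U M_U=\mathbb{I}$ together with irreducibility of the defining representation of $\mathrm{C}_n$), and set $Y_U=\langle z|U^\dagger XU|z\rangle$. Then $\|\mathcal M(X)\|_{\ell_p}^p=(d^p/N^{p-1})\,\mathbb{E}_U|Y_U|^p$ for every $p$, so that the combinatorial factors will cancel at the end. The first move is the interpolation inequality: Hölder applied to $v_i^2=|v_i|^{2/3}|v_i|^{4/3}$ gives $\|\mathcal M(X)\|_{\ell_1}\ge\|\mathcal M(X)\|_{\ell_2}^3/\|\mathcal M(X)\|_{\ell_4}^2$, so it suffices to bound the second moment from below and the fourth moment from above. (One works with $\ell_4$ rather than $\ell_3$ precisely because $|Y_U|^4=Y_U^4$ is a polynomial in $X$, whereas $|Y_U|^3$ is not.) The second moment is immediate: $\mathrm{C}_n$ is a unitary $3$-design, hence a $2$-design, so $\mathbb{E}_UY_U^2$ equals its Haar value $(\mathrm{tr}(X^2)+\mathrm{tr}(X)^2)/(d(d+1))\ge\|X\|_2^2/(d(d+1))$, dropping the nonnegative $\mathrm{tr}(X)^2$ term (which only helps; alternatively one first reduces to traceless $X$).

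The heart of the proof is the fourth moment, where $\mathrm{C}_n$ is \emph{not} a $4$-design. Here I would invoke the explicit description of the fourth commutant of $\mathrm{C}_n^{\otimes 4}$ from \cite{other_paper}: restricted to the symmetric subspace, which contains $(|z\rangle\!\langle z|)^{\otimes 4}$, the average $\mathbb{E}_U(U|z\rangle\!\langle z|U^\dagger)^{\otimes 4}$ is a linear combination of $P_{\mathrm{Sym}^4}$ and one further Clifford-invariant operator built from the Pauli basis, whose matrix element against $(|z\rangle\!\langle z|)^{\otimes 4}$ is $\sum_k(\mathrm{tr}\,W_k|z\rangle\!\langle z|)^4=d^2\alpha(z)$ — this is how $\alpha(z)$ enters. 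Contracting this identity with $X^{\otimes 4}$ produces a bound of the form $\mathbb{E}_UY_U^4\le d^{-4}\big(a_{\mathrm H}\,\mathrm{tr}(P_{\mathrm{Sym}^4}X^{\otimes 4})+c(z)\,\|\Xi(X)\|_{\ell_4}^4\big)$ with $a_{\mathrm H}=\mathcal O(1)$ and $0\le c(z)=\mathcal O(\alpha(z))$, the defect coefficient being controlled (and, when negative, simply discarded) by means of the bounds (\ref{eq:alpha1_bound}). Two elementary estimates then finish this step. First, for traceless $X$ Newton's identities give $\mathrm{tr}(P_{\mathrm{Sym}^4}X^{\otimes 4})=\tfrac18\mathrm{tr}(X^2)^2+\tfrac14\mathrm{tr}(X^4)\le\tfrac38\|X\|_2^4$. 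Second — and this is the step responsible for the appearance of the \emph{effective} rather than the ordinary rank — using $\|W_k\|_\infty=1$ and the Pauli Parseval identity $\sum_k\mathrm{tr}(W_kX)^2=d\,\mathrm{tr}(X^2)$,
\begin{equation*}
  \|\Xi(X)\|_{\ell_4}^4=\sum_k\mathrm{tr}(W_kX)^4\le\Big(\max_k\mathrm{tr}(W_kX)^2\Big)\sum_k\mathrm{tr}(W_kX)^2\le\|X\|_1^2\cdot d\,\|X\|_2^2 .
\end{equation*}
Together these give $\mathbb{E}_UY_U^4\le d^{-4}\big(C_1\|X\|_2^4+C_2\,d\,\alpha(z)\,\|X\|_1^2\|X\|_2^2\big)$ for absolute constants $C_1,C_2$.

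Feeding the second-moment lower bound and the fourth-moment upper bound into the interpolation inequality, the factors of $d$, $N$ and one power of $\|X\|_2$ cancel, leaving
\begin{equation*}
  \|\mathcal M(X)\|_{\ell_1}\ge\Big(\tfrac{d}{d+1}\Big)^{3/2}\frac{\|X\|_2^2}{\sqrt{C_1\|X\|_2^2+C_2\,d\,\alpha(z)\,\|X\|_1^2}}=\Big(\tfrac{d}{d+1}\Big)^{3/2}\frac{\|X\|_1}{\sqrt{\big(C_2\,d\,\alpha(z)\,r_{\mathrm{eff}}(X)+C_1\big)\,r_{\mathrm{eff}}(X)}},
\end{equation*}
using $r_{\mathrm{eff}}(X)=\|X\|_1^2/\|X\|_2^2$ in the last equality. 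Tracking the numerical values of $a_{\mathrm H}$, $c(z)$ and the prefactor coming out of the explicit fourth-moment formula of \cite{other_paper} turns this into the stated inequality with $C_1=10$ and $C_2=6$. The remaining assertion $r_{\mathrm{eff}}(X)\le\mathrm{rank}(X)$ is just Cauchy--Schwarz, $\|X\|_1^2\le\mathrm{rank}(X)\,\|X\|_2^2$, applied to the singular values.

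The main obstacle is the fourth-moment step: because the Clifford group fails to be a unitary $4$-design, one genuinely needs the representation-theoretic input of \cite{other_paper} — the precise form of the ``stabilizer'' correction to the fourth moment beyond the permutation operators — both to extract the dependence on $\alpha(z)$ and to pin down the constants $6$ and $10$. The conceptually important, if technically light, point is that this correction term must be bounded through $\sum_k\mathrm{tr}(W_kX)^4\le d\,\|X\|_1^2\|X\|_2^2$: this $\ell_\infty$--$\ell_2$ splitting of the characteristic function is exactly what converts $\mathrm{rank}(X)$ into the effective rank $\|X\|_1^2/\|X\|_2^2$, hence what makes the bound dimension-independent for low-rank (in particular pure) states and, for stabilizer fiducials where $\alpha(z)$ is of order $1/d^2$, confines any genuine degradation to the near-full-rank regime $r_{\mathrm{eff}}(X)\gtrsim d$.
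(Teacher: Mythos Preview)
Your approach is the paper's: Berger's inequality (your interpolation $\|v\|_{\ell_1}\ge\|v\|_{\ell_2}^3/\|v\|_{\ell_4}^2$ is exactly that), the second moment via the $2$-design property, the fourth moment via the explicit Clifford fourth-moment operator of \cite{other_paper}, and the decisive splitting $\sum_k\tr(W_kX)^4\le\|X\|_1^2\sum_k\tr(W_kX)^2=d\,\|X\|_1^2\|X\|_2^2$ that converts rank into effective rank.

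Two corrections. First, your Newton-identity bound $\tr(P_{\mathrm{Sym}^4}X^{\otimes 4})\le\tfrac38\|X\|_2^4$ is valid only for \emph{traceless} $X$, and ``reducing to traceless'' is not free here because $r_{\mathrm{eff}}$ is not invariant under $X\mapsto X-\tfrac{\tr X}{d}\mathbb{I}$; the paper instead retains the $\tr(X)^2$ term in both moments and proves $24\,\|X\|_2^2\,\tr(P_{\mathrm{Sym}^4}X^{\otimes 4})\le 9.673\,(\|X\|_2^2+\tr(X)^2)^3$ via \autoref{lem:4mIneq2}, which is exactly where the constant $10$ comes from. Relatedly, the correction term in the fourth moment is $\tr(P_{\mathrm{Sym}^4}QX^{\otimes 4})$, not $\tr(QX^{\otimes 4})=d^{-2}\|\Xi(X)\|_{\ell_4}^4$; since $X^{\otimes 4}$ need not be positive semidefinite you cannot simply drop $P_{\mathrm{Sym}^4}$, and the paper passes to $|X|^{\otimes 4}$ so that $0\le P_1\le Q$ applies --- this also cleanly absorbs the sign ambiguity in $\alpha(z)-\beta(z)$ that you propose to ``discard''. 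Second, the closing remark is backward: stabilizer fiducials have $\alpha(z)=1/d$, the \emph{maximum} of \eqref{eq:alpha1_bound}, so $d\alpha(z)=1$ and the extra term already dominates for $r_{\mathrm{eff}}\gtrsim 1$, not $r_{\mathrm{eff}}\gtrsim d$. Stabilizer states are the \emph{worst} Clifford orbit for this bound; that is precisely the content of \eqref{eq:main_stab} and \eqref{eq:stab_converse}.
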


This statement depends both on the effective rank of $X$ and on the choice of the fiducial $|z \rangle \in \mathbb{C}^d$. 
We discuss a number of instances in Sec.~\ref{sec:results}. 
Here, we merely mention the special case of \emph{stabilizer measurements} on pure states.
Stabilizer states \cite{Gott97the,NielC00book}
are extensively studied in quantum information. 
They form an orbit under the Clifford group and thus fall under the scope of Theorem~\ref{thm:main}.

\begin{corollary}[Distinguishing pure states with stabilizer measurements] \label{prop:pure_optimal}
Fix $d=2^n$ and let $\mathcal{M}_{\mathrm{stab}}$ denote the POVM of all $n$-qubit stabilizer states. Then 
\begin{equation}
  \left\| \mathcal{M}_{\mathrm{stab}}(\rho - \sigma) \right\|_{\ell_1}
  \geq 
  \frac{1}{6} \left\| \rho - \sigma \right\|_1	\label{eq:pure_optimal}
\end{equation}
for any pair of \emph{pure states} $\rho,\sigma \in \mathcal{S}_d$.
\end{corollary}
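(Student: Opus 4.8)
\emph{Reduction.} The plan is to run the argument behind Theorem~\ref{thm:main} on $X := \rho - \sigma$, exploiting that this particular $X$ has an essentially trivial spectrum. We may assume $\rho \neq \sigma$, since otherwise both sides vanish. Writing $\rho = |\psi\rangle\!\langle\psi|$ and $\sigma = |\phi\rangle\!\langle\phi|$, the operator $X$ is hermitian, traceless and supported on the (two-dimensional) span of $|\psi\rangle$ and $|\phi\rangle$, where its eigenvalues are $\pm\mu$ with $\mu = \sqrt{1 - |\langle\psi|\phi\rangle|^2} \in (0,1]$. Consequently $\tr X = \tr X^3 = 0$, $\tr X^2 = 2\mu^2$ and $\tr X^4 = 2\mu^4$, so that $\|X\|_1 = 2\mu$ and $r_{\mathrm{eff}}(X) = \|X\|_1^2/\|X\|_2^2 = 2$; note in particular $\tr X^4 = \tfrac12 (\tr X^2)^2$, a factor two below the largest value $(\tr X^2)^2$ possible at a fixed second moment, and this slack is what eventually sharpens the general constant to $\tfrac16$. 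Moreover, any stabilizer fiducial $|z\rangle$ has characteristic function $\pm 1$ on the $d$ Pauli operators that stabilize it (up to sign) and $0$ everywhere else, so $\alpha(z) = d/d^2 = 1/d$, i.e.\ $d\,\alpha(z) = 1$.

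\emph{Strategy.} A direct substitution of $r_{\mathrm{eff}}(X) = 2$ and $d\alpha(z) = 1$ into Theorem~\ref{thm:main} yields only $\|\mathcal M(X)\|_{\ell_1} \geq 44^{-1/2}\|X\|_1$, which falls just short of the claim; the stronger constant follows by re-running the proof of Theorem~\ref{thm:main} for this specific $X$, carrying the exact values of $\tr X^k$ through the argument instead of estimating them via the effective rank. The engine is the Hölder interpolation inequality $\|v\|_{\ell_1} \geq \|v\|_{\ell_2}^3 / \|v\|_{\ell_4}^2$ applied to $v = \mathcal M_{\mathrm{stab}}(X) \in \mathbb R^N$, where $N$ is the number of $n$-qubit stabilizer states; this reduces the task to a sharp lower bound on the second moment $\|\mathcal M_{\mathrm{stab}}(X)\|_{\ell_2}$ together with an upper bound on the fourth moment $\|\mathcal M_{\mathrm{stab}}(X)\|_{\ell_4}$.

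\emph{The two moments.} The second moment is exact because the stabilizer states form a complex projective $2$-design: $\sum_s (|s\rangle\!\langle s|)^{\otimes 2}$ is proportional to $\id + F$ ($F$ the swap operator), which together with $\tr X = 0$ gives $\|\mathcal M_{\mathrm{stab}}(X)\|_{\ell_2}^2 = \tfrac{2 d \mu^2}{(d+1) N}$. For the fourth moment one invokes \cite{other_paper}: the operator $\sum_s (|s\rangle\!\langle s|)^{\otimes 4}$ equals its Haar ($4$-design) value plus a correction lying in the part of the commutant of $\mathrm C_n^{\otimes 4}$ not spanned by the permutations (the stabilizer states fail to be a $4$-design), and the size of this correction is controlled by $\alpha(z)$. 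Contracting with $X^{\otimes 4}$ and using $\tr X = \tr X^3 = 0$ annihilates every permutation or correction term carrying a factor $\tr X$, leaving the symmetric contribution $\tfrac{\mu^4}{\binom{d+3}{4}}$ plus a correction term; the latter depends on the Pauli content $\sum_k (\tr W_k X)^4$ of $X$, which is bounded by $\sum_k (\tr W_k X)^4 \leq d\,\|X\|_1^2\,\|X\|_2^2 = 8 d \mu^4$. Together these give $\|\mathcal M_{\mathrm{stab}}(X)\|_{\ell_4}^4 \leq C(d)\, \mu^4 / N^3$ with $C(d)$ bounded in $d$. Inserting the two moments into the interpolation inequality, the powers of $\mu$ and of $N$ cancel and one is left with $\|\mathcal M_{\mathrm{stab}}(X)\|_{\ell_1}/\|X\|_1 \geq (2d/(d+1))^{3/2}/(2\sqrt{C(d)})$, which a short computation shows exceeds $\tfrac16$ for every $d = 2^n$; the bound is not tight — for $n = 1$ one gets $\tfrac13$.

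\emph{Main obstacle.} The one step that is not routine is the fourth-moment estimate: it requires importing the description of $\sum_s (|s\rangle\!\langle s|)^{\otimes 4}$ from the representation theory of $\mathrm C_n^{\otimes 4}$ in \cite{other_paper}, determining which of its terms survive contraction against a traceless operator of rank at most $2$, and quantifying the correction's size via $\alpha(z) = 1/d$. The second-moment identity, the interpolation inequality and the concluding arithmetic are mechanical; one should, though, check the low-dimensional cases $n = 1, 2$ separately, as there the commutant of $\mathrm C_n^{\otimes 4}$ behaves non-generically.
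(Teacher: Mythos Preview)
Your approach is essentially the paper's own: both run Berger's inequality, compute the second moment via the 2-design property, bound the fourth moment through the decomposition of \autoref{thm:main_technical} with $\alpha(z)=1/d$, control the $P_1$-correction by $\|X\|_1^2\|X\|_2^2/d$, and then exploit that $X=\rho-\sigma$ is rank-two traceless to sharpen the constant to $1/6$ (the paper packages this last step as \autoref{lem:4mIneq4}). Your caveat about checking $n=1,2$ separately is unnecessary, since \autoref{thm:main_technical} holds for all $n\geq 1$ and the resulting inequality $(d+4)(d+2)\geq (d+1)^2$ is valid for every $d\geq 2$.
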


Comparing this to \eqref{eq:uniform_pure} shows that, remarkably, stabilizer measurements perform essentially optimally at distinguishing pure states.

We note that, in fact, \autoref{prop:pure_optimal} is true for any Clifford orbit.

\subsection{Entropic uncertainty relations for stabilizer bases}

The techniques developed in Ref.~\cite{other_paper} and the present paper also allow us to derive new \emph{entropic uncertainty relations} \cite{wehner_entropic_2010,coles_entropic_2015} for stabilizer measurements.

To introduce the concept, let $\mathcal{M}$ be a quantum measurement and $\mathcal{M}(\rho)$ the distribution obtained by applying $\mathcal{M}$ to the quantum state $\rho$ as in \eqref{eq:POVM_outcome_vector}. 
A measure for the ``uncertainty'' that the distribution $\mathcal{M}(\rho)$ leaves about the outcome is given by its Shannon entropy
\begin{equation*}
	H \left( \mathcal{M} | \rho \right) :=  H \left( \mathcal{M}(\rho) \right),
\end{equation*} 
where
\begin{equation}
	H \left( p \right) 
	= - \sum_{k=1}^N p_k \log_2 \left( p_k \right) \in \left[ 0, \log_2 (N) \right]. \label{eq:shannon}
\end{equation}
In the well-known relation attributed to Heisenberg, the uncertainty of a distribution over real numbers is quantified in terms of its variance. 
However, the outcomes of finite POVMs treated here are not usually labeled by real numbers in a natural way. 
Thus, the variance cannot be defined and entropy becomes a more suitable measure.

A typical entropic uncertainty relation captures the incompatibility of several measurements $\mathcal{M}_1,\ldots,\mathcal{M}_L$ by lower-bounding the average entropy associated with the individual outcome probability distributions:
\begin{equation*}
  \frac{1}{L} \sum_{k=1}^L H \left( \mathcal{M}_k | \rho \right) \geq c_{\mathcal{M}_1,\ldots,\mathcal{M}_L} \quad \forall \rho. % \in \mathcal{S}_d.
\end{equation*}

For example, a strong entropic uncertainty relation is known 
to hold for measurements 
$\mathcal{B}_1, \dots, \mathcal{B}_{d+1}$
that correspond to a maximal set of \emph{mutually unbiased bases} 
\cite{Ivan81, WootF89, KlapR05M, DurtEBZ10}:
\begin{equation}
\frac{1}{d+1} \sum_{k=1}^{d+1} H \left( \mathcal{B}_k | \rho \right) \geq \log_2 (d+1)-1.	\label{eq:mub_uncertainty}
\end{equation}
Note that this is a strong bound, because each entropic term on the left hand side is bounded from above by $\log_2(d)$ for any choice of the basis measurement.

In Sec.~\ref{sub:entropic_main}, we derive a slightly stronger bound for stabilizer measurements.
Similar to mutually unbiased bases, the set of all $n$-qubit stabilizer states is also a union of 
orthonormal bases.
Denote the associated measurements by $\mathcal{B}_{1},\ldots,\mathcal{B}_{L}$.  
Our main technical result allows us to infer an average entropic uncertainty relation for stabilizer bases that asymptotically outperforms \eqref{eq:mub_uncertainty}:
\begin{equation*}
\frac{1}{L} \sum_{k=1}^{L} H \left( \mathcal{B}_k | \rho \right) \geq \log_2 (d) - c(d),
\end{equation*}
where $\lim_{d \to \infty} c(d) \simeq 0.854$. 
Further results are given in Sec.~\ref{sub:entropic_main}.

\section{Results}

In this section, we briefly introduce  complex projective designs and Clifford orbits, before stating the results of the present paper.

\subsection{Complex Projective Designs}
\label{sec:designs}

We will frequently compare the results we obtain for Clifford orbits to those that are valid for \emph{complex projective designs} \cite{DelsGS77,Hogg82,ReneBSC04, Scot06,AmbaE07}. 
An introduction to the theory and applications of designs is given in the companion paper \cite{other_paper}. 
Here, we merely state the definition.

\begin{definition}[Complex projective design]
A \emph{complex projective $t$-design} is a set of unit vectors $\left\{|x_k\rangle \right\}_{k=1}^N \subset \mathbb{C}^d$ whose outer products obey
\begin{align}
\frac{1}{N} \sum_{k=1}^N |x_k \rangle \! \langle x_k |^{\otimes t} &= \int_{\|v\|_{\ell_2} = 1} \mathrm{d} v |v \rangle \! \langle v|^{\otimes t}\nonumber\\
&=\binom{d+t-1}{t}^{-1}P_{\mathrm{Sym}^t}. \label{eq:tdesign}
\end{align}
Here integration on the r.h.s. is taken with respect to the uniform measure on the complex unit sphere in $\mathbb{C}_d$, and $P_{\mathrm{Sym}^t}$ is the projector onto the totally symmetric subspace of $(\mathbb{C}^d)^{\otimes t}$.
Likewise, we call the set 
\begin{equation}
\mathcal{M}_{\mathrm{tD}} = \left\{ \frac{d}{N} |x_k\rangle \!\langle x_k | \right\}_{k=1}^N
\end{equation}
 a \emph{$t$-design POVM}.
\end{definition}

We find it fruitful to think of Eq.~(\ref{eq:tdesign}) as saying that drawing vectors uniformly from a complex projective $t$-design reproduces the first $2\cdot t$ moments of Haar-random vectors.

It is known \cite{ambainis_quantum_2007,matthews_distinguishability_2009,lancien_distinguishing_2013}
that 4-design POVMs perform essentially optimally at the task of distinguishing quantum states.
More precisely,
\autoref{thm:4design} below---which is a slight improvement over existing results in \cite{lancien_distinguishing_2013}---implies
\begin{equation}
\| \mathcal{M}_{\mathrm{4D}}(X) \|_{\ell_1} > \frac{0.32}{\sqrt{ \mathrm{rank}(X)}} \| X \|_1 \quad \forall X \in \mathcal{H}_d. \label{eq:4design}
\end{equation}

In stark contrast to this, 2-design POVMs perform very poorly at distinguishing pure quantum states \cite[Theorem 12]{matthews_distinguishability_2009}:
\begin{equation}
\| X \|_{\mathcal{M}_{\textrm{2D}}} \geq \frac{1}{2(d+1)} \| X \|_1 \quad \forall X \in H_d. \label{eq:2design}
\end{equation}
The pre-factor $\frac{1}{d+1}$ in this relation is in general unavoidable, even if $X = \rho - \sigma$ is a difference of pure states \cite[Section 2.C]{matthews_distinguishability_2009}.

\subsection{Clifford Orbits}
\label{sec:clifford}

The Clifford group \cite{Gott97the, GottC99, NielC00book, BravK05} can be defined as the normalizer of the group generated by the Pauli operators.
Alternatively, for dimensions $d=2^n$ that are a power of two, the Clifford group is the group generated by Pauli operators, the Hadamard gate, phase gate, and the controlled-NOT gate.
Again, a more complete treatment is given in the companion paper \cite{other_paper}.

The multi-qubit Clifford group has a very rich structure. 
Relevant for our result is that it forms a \emph{unitary $3$-design} 
\cite{zhu_multiqubit_2015,webb_clifford_2015}. 
Unitary $t$-designs are a generalization of  complex projective $t$-design  to unitary matrices \cite{dankert_exact_2009,gross_evenly_2007}.
They have the particular property that every orbit of a unitary $t$-design forms a complex projective $t$-design.
This in turn implies that every multi-qubit Clifford POVM is also a 3-design POVM. 
For the most prominent orbit -- the set of all stabilizer states -- the 3-design property has been established independently \cite{kueng_qubit_2015}.

Unfortunately, POVMs derived from designs of degree $t=3$ do not achieve optimal norm constants and it has been shown that neither is the Clifford group a unitary 4-design \cite{zhu_multiqubit_2015,webb_clifford_2015}, nor does the weaker statement hold that stabilizer states form a complex projective 4-design \cite{kueng_qubit_2015}.
However, in the companion paper \cite{other_paper}, alternative methods for analyzing the $8$th moments of Clifford orbits have been established.
These results form the basis for the discussion of Clifford POVMs below.

\begin{definition}[Clifford POVM] \label{def:clifford}
Set $d=2^n$ and fix $|z\rangle \in \mathbb{C}^d$ with unit length. Let $\left\{ C_k |z \rangle\langle z|C_k^\dag :\; C_k \in \mathrm{C}_n \right\}$ denote the orbit of $|z\rangle\langle z|$ under the Clifford group and $N$ its cardinality.
We then define the associated \emph{Clifford POVM} to be
\begin{equation*}
\mathcal{M}_{C,z} = \left\{ \frac{d}{N} C_k |z \rangle \! \langle z| C_k^\dagger: \; C_k \in \mathrm{C}_n \right\}.
\end{equation*}
\end{definition}

\subsection{Technical results} 
\label{sec:results}

Recall the statement of \autoref{thm:main}:
\begin{equation}
\| \mathcal{M}_{C,z} (X) \|_{\ell_1} \geq \frac{ \| X \|_1}{\sqrt{(6 d \alpha (z) r_{\mathrm{eff}}(X) + 10)r_{\mathrm{eff}} (X)}} \label{eq:main_clifford}
\end{equation}
for any$X\in H_d$.
This statement depends on the choice of fiducial via $\alpha (z)$ introduced in \eqref{eq:alpha}.
It is worthwhile to point out that, unlike its counterparts  for 4- and 2-design POVMs, Formula~\eqref{eq:main_clifford} is sensitive to the effective rank of the matrix $X$ considered:
\begin{equation}
\|  \mathcal{M}_{C,z} (X)  \|_{\ell_1} \geq \frac{ \| X \|_1}{4 \sqrt{r_{\mathrm{eff}}(X)}}, \label{eq:bound_good}
\end{equation}
provided that $r_{\mathrm{eff}}(X) \leq \frac{1}{d \alpha (z)}$ Otherwise:
\begin{equation}
\|  \mathcal{M}_{C,z} (X) \|_{\ell_1} \geq \frac{\| X \|_1}{4 r_{\mathrm{eff}}(X) \sqrt{d\alpha (z)}}. \label{eq:bound_bad}
\end{equation}
Thus, if $r_{\mathrm{eff}}(X)$ is below a certain threshold (which depends on the choice of the fiducial), the favorable bound \eqref{eq:bound_good} applies. Such a situation is comparable to the 4-design case.
However, above this threshold one needs to resort to the much weaker bound \eqref{eq:bound_bad}.  Depending on the choice of fiducial, its scaling may be comparable to the 2-design case, once $r_{\mathrm{eff}}(X)$ approaches $d$. Fortunately, in Ref.~\cite{other_paper} we have shown that the value of $\alpha(z)$ for a typical fiducial $|z\rangle$ is very close to the value required for a 4-design [see \eqref{eq:TypicalFiducial}], so typical Clifford POVMs perform almost as well as 4-design POVMs.

The following converse statement shows that the aforementioned  behavior is essentially unavoidable for certain Clifford orbits.

\begin{theorem} \label{thm:converse}
Fix $d=2^n$, let $\mathcal{M}_{C,z}$ denote a Clifford POVM with fiducial $|z\rangle \in \mathbb{C}^d$ and fix $W \in \mathcal{H}_d$ to be any Pauli matrix, $W \neq \mathbb{I}$. 
Then
\begin{equation}\label{eq:converse}
\|  \mathcal{M}_{C,z} (W)  \|_{\ell_1} =\frac{ \|\Xi (|z \rangle \! \langle z| ) \|_{\ell_1}-1}{(d+1)(d-1)}\| W \|_1. 
\end{equation}
\end{theorem}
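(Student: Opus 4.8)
The plan is to compute $\|\mathcal{M}_{C,z}(W)\|_{\ell_1}$ by expanding it as a sum of absolute values of Pauli expectation values over the Clifford orbit, and then to exploit that conjugation by the Clifford group moves $W$ uniformly over all nontrivial Pauli operators. First I would dispose of the right-hand side: since $W\neq\mathbb{I}$ is an $n$-qubit Pauli, it is traceless and squares to $\mathbb{I}$, hence has eigenvalues $\pm1$ with equal multiplicities $d/2$, so $\|W\|_1=d$. Thus the claim is equivalent to $\|\mathcal{M}_{C,z}(W)\|_{\ell_1}=\frac{d}{d^2-1}\bigl(\|\Xi(|z\rangle\!\langle z|)\|_{\ell_1}-1\bigr)$. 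Next, by Born's rule and \autoref{def:clifford},
\begin{equation*}
\|\mathcal{M}_{C,z}(W)\|_{\ell_1}=\frac{d}{N}\sum_{k}\bigl|\tr\bigl(W\,C_k|z\rangle\!\langle z|C_k^\dagger\bigr)\bigr|=\frac{d}{N}\sum_{k}\bigl|\langle z|C_k^\dagger W C_k|z\rangle\bigr|,
\end{equation*}
the sum running over the orbit. Since the summand $\langle z|C_k^\dagger W C_k|z\rangle=\tr(W\,C_k|z\rangle\!\langle z|C_k^\dagger)$ depends only on the orbit point $C_k|z\rangle\!\langle z|C_k^\dagger$, the orbit--stabilizer theorem lets me replace the orbit sum by $|H_z|^{-1}$ times the sum over all of $\mathrm{C}_n$, where $H_z$ is the stabilizer of $|z\rangle\!\langle z|$; using $N|H_z|=|\mathrm{C}_n|$ the prefactor collapses to $\|\mathcal{M}_{C,z}(W)\|_{\ell_1}=\frac{d}{|\mathrm{C}_n|}\sum_{C\in\mathrm{C}_n}\bigl|\langle z|C^\dagger W C|z\rangle\bigr|$.

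The heart of the argument is to evaluate this group average. Since the Clifford group normalizes the Pauli group, $C^\dagger W C$ equals $\pm$ a Pauli operator for every $C$, and the sign is irrelevant inside the absolute value. The induced action on Pauli operators modulo phases is the symplectic group $\mathrm{Sp}(2n,\mathbb{F}_2)$ acting on $\mathbb{F}_2^{2n}$, which is transitive on the $d^2-1$ nonzero vectors; lifting this transitivity through the kernel of $\mathrm{C}_n\to\mathrm{Sp}(2n,\mathbb{F}_2)$ (the phase-extended Pauli group) and applying orbit--stabilizer once more shows that each nontrivial Pauli $W'$ occurs as $C^\dagger W C$, up to sign, for exactly $|\mathrm{C}_n|/(d^2-1)$ elements $C$. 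Hence
\begin{equation*}
\sum_{C\in\mathrm{C}_n}\bigl|\langle z|C^\dagger W C|z\rangle\bigr|=\frac{|\mathrm{C}_n|}{d^2-1}\sum_{W'\neq\mathbb{I}}\bigl|\langle z|W'|z\rangle\bigr|=\frac{|\mathrm{C}_n|}{d^2-1}\Bigl(\|\Xi(|z\rangle\!\langle z|)\|_{\ell_1}-1\Bigr),
\end{equation*}
where the last step peels off the $W'=\mathbb{I}$ term, whose contribution to $\|\Xi(|z\rangle\!\langle z|)\|_{\ell_1}=\sum_k|\langle z|W_k|z\rangle|$ is $\langle z|\mathbb{I}|z\rangle=1$. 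Substituting back and using $\|W\|_1=d$ gives $\|\mathcal{M}_{C,z}(W)\|_{\ell_1}=\frac{d}{d^2-1}(\|\Xi(|z\rangle\!\langle z|)\|_{\ell_1}-1)=\frac{\|\Xi(|z\rangle\!\langle z|)\|_{\ell_1}-1}{(d+1)(d-1)}\|W\|_1$, which is the claim.

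I expect the only genuinely delicate point to be the equidistribution step --- the precise statement that conjugation by the \emph{full} Clifford group hits each nontrivial Pauli equally often. This rests on the standard structure theory of the multi-qubit Clifford group (surjectivity onto $\mathrm{Sp}(2n,\mathbb{F}_2)$ with kernel the Pauli group, plus transitivity of $\mathrm{Sp}(2n,\mathbb{F}_2)$ on nonzero vectors), which I would quote from the companion paper \cite{other_paper}. The remaining ingredients --- the trace norm of $W$, the reduction from the orbit sum to a group sum (where one must keep in mind that the orbit can be a proper quotient of $\mathrm{C}_n$ when $|z\rangle$ has a nontrivial stabilizer, absorbed cleanly by the bookkeeping above), and isolating the identity term in $\Xi$ --- are elementary.
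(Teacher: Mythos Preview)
Your proof is correct and follows essentially the same route as the paper's: rewrite the POVM $\ell_1$-norm as a group average of $|\langle z|C^\dagger W C|z\rangle|$, use that Clifford conjugation acts transitively on nontrivial Paulis modulo phase to replace the average by $\frac{1}{d^2-1}\sum_{W'\neq\mathbb{I}}|\langle z|W'|z\rangle|$, and identify this with $\|\Xi(|z\rangle\!\langle z|)\|_{\ell_1}-1$ together with $\|W\|_1=d$. If anything, your treatment is more careful than the paper's on two points the paper leaves implicit: the orbit--stabilizer bookkeeping that passes from the orbit sum (with prefactor $d/N$) to the full group sum (with prefactor $d/|\mathrm{C}_n|$), and the explicit justification of equidistribution via the surjection $\mathrm{C}_n\to\mathrm{Sp}(2n,\mathbb{F}_2)$ and transitivity of the latter on nonzero vectors.
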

The coefficient in the theorem satisfies
\begin{equation}\label{eq:converse2}
\frac{1}{d+1}\leq \frac{\|\Xi (|z \rangle \! \langle z| ) \|_{\ell_1}-1}{(d+1)(d-1)}\leq \frac{1}{\sqrt{d+1}}
\end{equation}
which follows from the properties of 
the characteristic function for a pure state. The lower bound is saturated if and only $z \in \mathbb{C}^d$ is a stabilizer state, and the upper bound is saturated iff 
\begin{equation*}
|\langle z|W_k|z\rangle|=\frac{1}{\sqrt{d+1}},\quad  \forall 2\leq k\leq d^2,
\end{equation*}
in which case the orbit of $z$ under the action of the Pauli group forms a symmetric informationally complete POVM \cite{renes_symmetric_2004}. 
Moreover, the pre-factor in \eqref{eq:converse} may be related to $\alpha(z)$---the main figure of merit in \autoref{thm:main}. We provide such a relation in Eq.~\eqref{eq:converse3} below.

We now move on to discussing the implications of our findings for four different Clifford orbits.
\begin{enumerate}
\item[(i)] \emph{Stabilizer states:} multi-qubit stabilizer states form a particular Clifford orbit $\mathcal{M}_{\mathrm{stab}}$ with $N= 2^n \prod_{j=1}^n \left( 2^j + 1 \right)$ elements. The characteristic function of any stabilizer state 
has precisely $d$ non-vanishing components with constant modulus 1---see Sec.~\ref{sec:fiducials} below. 
This in turn implies $d \alpha (z)=1$ for any stabilizer state fiducial $|z\rangle \in \mathbb{C}^d$. Consequently, the favorable bound in \autoref{thm:main} is only valid for rank-one matrices $X$, where $\sqrt{r_{\mathrm{eff}}(X)}$ and $r_{\mathrm{eff}}(X)$ coincide. In turn we need to conclude
\begin{equation}
\| \mathcal{M}_{\mathrm{stab}}(X) \|_{\ell_1} \geq \frac{1}{4 r_{\mathrm{eff}}(X)} \| X \|_1, \label{eq:main_stab}
\end{equation}
for any $X \in \mathcal{H}_d$.
This is a worst case behavior for any Clifford orbit. However, \autoref{thm:converse} assures that such a scaling is unavoidable: the characteristic function of stabilizer states obeys $\|\Xi (|z \rangle\! \langle z|) \|_{\ell_1} = d$ and inserting this into \eqref{eq:converse} reveals
\begin{equation}
\| \mathcal{M}_{\mathrm{stab}}(W) \|_{\ell_1} = \frac{d}{d+1} \frac{ \| W \|_1}{r_{\mathrm{eff}} (W)} \label{eq:stab_converse}
\end{equation}
for any Pauli matrix $W \neq \mathbb{I}$. This equation implies that \eqref{eq:main_stab} is actually tight up to a multiplicative constant.

\item[(ii)] \emph{Magic state fiducial:} Let $|z \rangle \! \langle z| = \rho_{\textrm{magic}}^{\otimes n}$ be the $n$-fold tensor product of  the single qubit \emph{magic state'}
\begin{equation*}
\rho_{\textrm{magic}} = \frac{1}{2} \left( \mathbb{I} + \frac{1}{\sqrt{3}} \left( \sigma_1 + \sigma_2 + \sigma_3 \right) \right) \in \mathcal{S}_2,
\end{equation*} 
where $\sigma_1,\sigma_2,\sigma_3 \in \mathcal{H}_2$ denote the single-qubit Pauli matrices.
Such a fiducial obeys $d \alpha (z) =(\frac{2}{3})^n< \frac{1}{\sqrt{d}}$ (see Eq.~\eqref{eq:magic_4norm} below). This is considerably smaller than the analogous quantity for stabilizer states.
In turn, \autoref{thm:main} implies that Clifford POVMs %$\mathcal{M}_{C,\textrm{magic}}$ 
with a magic state fiducial obey
\begin{equation}
\| \mathcal{M}_{C,\textrm{magic}}(X) \|_{\ell_1} \geq \frac{1}{4 \sqrt{ r_{\mathrm{eff}}(X)}} \| X \|_1 \label{eq:magic_good}
\end{equation} 
for any $X \in \mathcal{H}_d$ with $r_{\mathrm{eff}}(X) \leq (\frac{3}{2})^n$. 
For matrices $X$ whose effective rank exceeds $(\frac{3}{2})^n$, \autoref{thm:main} still implies
\begin{equation}
\| \mathcal{M}_{C,\textrm{magic}}(X) \|_{\ell_1} \geq  \frac{(\frac{3}{2})^{n/2} \| X \|_1}{4 r_{\mathrm{eff}}(X)}> \frac{ d^{0.29} \| X \|_1}{4 r_{\mathrm{eff}}(X)} \label{eq:magic}
\end{equation}
which outperforms the analogous bound for stabilizer states by a factor of $d^{0.29}$. Conversely, \autoref{thm:converse} requires
\begin{equation}
\| \mathcal{M}_{C,\textrm{magic}}(W) \|_{\ell_1}  \leq \frac{d^{0.45} \| W \|_1}{r_{\mathrm{eff}}(W)}, \label{eq:magic_converse}
\end{equation}
because $\|\Xi (|z \rangle \! \langle z|)\|_{\ell_1} =(1+\sqrt{3})^n \leq d^{1.45}$ (see Eq.~\eqref{eq:magic_1norm} below).
Unlike before, this bound is too weak to ensure tightness of \eqref{eq:magic}.
However, asymptotically it does rule out the possibility of an optimal scaling for this type of Clifford orbits.

\item[(iii)] \emph{4-design fiducial:} As pointed out in \cite{other_paper}, particular choices of fiducials $|z \rangle \in \mathbb{C}^d$ result in Clifford orbits that actually form a complex projective 4-design. 
The necessary and sufficient requirement for such fiducials is $\alpha (z)= \frac{4}{(d+3)d}$. 
According to \autoref{thm:4design} below,
\begin{equation*}
\| \mathcal{M}_{C,\textrm{4D}}(X) \|_{\ell_1} \geq \frac{0.32}{ \sqrt{ \mathrm{rank}(X)}} \| X \|_1 \quad \forall X \in \mathcal{H}_d.
\end{equation*}
 This bound is optimal up to  a small multiplicative constant. Combining \autoref{thm:converse} with \eqref{eq:converse2} demands
\begin{align*}
\| \mathcal{M}_{C,\textrm{4D}}(W) \|_{\ell_1} \leq  \frac{ \| W\|_1}{\sqrt{d+1}} <\frac{ \| W\|_1   }{\sqrt{\mathrm{rank}(W)} }
\end{align*}
for any Pauli matrix $W$ that is not proportional to the identity. 

\item[(iv)] \emph{Typical fiducial:} According to \cite{other_paper}, if $|z\rangle$ is distributed uniformly on the complex unit sphere in  $\mathbb{C}^d$, then the following inequality  
\begin{equation}\label{eq:TypicalFiducial}
\alpha (z) \leq \frac{6}{(d+3)d}
\end{equation} 
is satisfied with very high probability. 
Such orbits behave almost like 4-designs and \autoref{thm:main_generic} below implies
\begin{equation*}
\| \mathcal{M}_{C,z}(X) \|_{\ell_1} \geq \frac{ \| X \|_1}{\sqrt{22 r_{\mathrm{eff}}(X)}} \quad \forall X \in \mathcal{H}_d.
\end{equation*}

\end{enumerate}

\subsection{Implications for distinguishing quantum states} \label{sub:main_distinguishability}

Let us now turn back our attention to the task of distinguishing different quantum states in the single shot scenario. 
Matthews \emph{et al.} introduced the POVM norm constant $\lambda_{\mathcal{M}}$ \eqref{eq:relation_constant} to compare the performance of a fixed POVM $\mathcal{M}$ directly to Helstrom's optimal strategy. Without putting further restrictions on the states $\rho,\sigma \in \mathcal{S}_d$ to be distinguished, \autoref{thm:main} only allows us to infer
\begin{equation}\label{eq:lambda_bound}
\lambda_{\mathcal{M}_{C,z}} \geq \frac{1}{\sqrt{d(6d^2 \alpha (z) +10)}}.
\end{equation}
for Clifford POVMs with fiducial $|z\rangle \in \mathbb{C}^d$. For the particular case of multi-qubit stabilizer states, we have
\begin{equation}\label{eq:stab_bound}
\frac{1}{\sqrt{6}d} \leq \lambda_{\mathcal{M}_{\textrm{stab}}} \leq \frac{1}{d+1}.
\end{equation}
Here the lower bound is derived in Sec.~\ref{sec:Proofs};
the upper bound follows from \eqref{eq:stab_converse}\footnote{Every Pauli matrix $W$ has vanishing trace and is therefore proportional to a particular difference $\tau \rho - (1-\tau) \sigma$ of quantum states $\rho,\sigma \in \mathcal{S}_d$ with $\tau = \frac{1}{2}$.}.
This result shows that the constant $\lambda_{\mathcal{M}_{\textrm{stab}}}$ scales like $\lambda_{\mathcal{M}_{\textrm{2D}}}$ from \eqref{eq:2design}---despite the fact that multi-qubit stabilizers form  a 3-design.

For Clifford orbits with a magic state fiducial we obtain
\begin{equation*}
\frac{1}{4 d^{0.71}} \leq \lambda_{\mathcal{M}_{C,\textrm{magic}}} \leq \frac{1}{d^{0.55}}.
\end{equation*}
 Qualitatively, this bound assures that the capacity of such POVMs to distinguish quantum states is ``half way'' between the existing 2-design ($\lambda_{\textrm{2D}} \geq \frac{1}{2(d+1)}$ ) and 4-design guarantees ($\lambda_{\textrm{4d}} \geq \frac{0.32}{\sqrt{d}}$). 
Naively, one may expect precisely such a behavior for 3-designs.

Finally, Clifford POVMs with  typical random fiducials perform considerably better. Indeed, \autoref{thm:main_generic} assures
\begin{equation*}
\lambda_{\mathcal{M}_{C,z}} \geq \frac{1}{\sqrt{22d}},
\end{equation*}
which---up to a multiplicative constant---reproduces the close-to-optimal 4-design case. Clearly, this in particular extends to Clifford POVMs with a 4-design fiducial.

We emphasize that the constant $\lambda_{\mathcal{M}}$ is a worst case promise for correctly distinguishing \emph{any} pair of states $\rho, \sigma \in \mathcal{S}_d$. 
This may be too pessimistic for more concrete scenarios where additional structure is present. 
One model assumption, which is often met in practice, is approximate purity. In the extreme case, where both $\rho$ and $\sigma$ are assumed to be pure, \autoref{thm:main} assures $\lambda_{\mathcal{M}_{C,z}}|_{\rho,\sigma\textrm{ pure}} \geq \frac{1}{\sqrt{44}}$
for any Clifford orbit, including stabilizer states. A slightly better bound was presented in \autoref{prop:pure_optimal},
\begin{equation*}
\lambda_{\mathcal{M}_{C,z}}|_{\rho,\sigma\textrm{ pure}} \geq \frac{1}{6}.
\end{equation*}
 Up to a multiplicative constant, this reproduces the 4-design behavior. 
It is worthwhile to point out that 2-design POVMs do not allow for exploiting purity at all \cite[Section 2.C]{matthews_distinguishability_2009}.

Similar conclusions may be drawn if we relax the model assumption of purity to low effective rank $r \ll d$. 
As the rank constraint $r$ increases, the bounds on $\lambda_{\mathcal{M}_{C,z}}|_{\rho,\sigma \textrm{ rank $r$}}$ become gradually weaker until they approach \eqref{eq:lambda_bound} for $r_{\mathrm{eff}}=d$.

Finally, we point out that the notion of effective rank is useful for several concrete applications.
Consider for instance the task of deciding whether a pure state $\phi = | \phi \rangle \! \langle \phi|$, or the maximally mixed state $\frac{1}{d} \mathbb{I}$ was prepared. \autoref{lem:effective_rank} below assures that $X=\frac{1}{2} \phi - \frac{1}{2d}\mathbb{I}$ has effective rank less than 4 and consequently \eqref{eq:main_stab} implies
\begin{equation*}
\left\|\mathcal{M}_{C,z} \left(  \phi - \frac{1}{d} \mathbb{I} \right) \right\|_{\ell_1} \geq \frac{1}{16} \left\|   \phi - \frac{1}{d} \mathbb{I} \right\|_1
\end{equation*}
for any Clifford orbit. 
This implies that the optimal bias achievable with such a POVM measurement is directly comparable to Helstrom's optimal one. We will use such generalizations for deriving the entropic certainty relations presented in the next section.

\subsection{Entropic uncertainty and certainty relations for stabilizer bases} \label{sub:entropic_main}

Stabilizer states form the  most structured Clifford orbit. Similar to a maximal set of mutually unbiased bases, multi-qubit stabilizer states form a union of $\frac{N}{d} = \prod_{j=1}^n \left( 2^j+1 \right)$ different orthonormal bases $\mathcal{B}_1,\ldots,\mathcal{B}_{N/d}$.
These stabilizer bases obey the same entropic uncertainty relation as mutually unbiased bases do:
\begin{equation}
\frac{d}{N} \sum_{k=1}^{N/d} H \left( \mathcal{B}_k | \rho \right) \geq \log_2 (d+1) -1 \label{eq:stab_uncertainty}.
\end{equation}
As pointed out in \cite{wehner_entropic_2010} this strong entropic uncertainty relation may be derived from the fact that both stabilizer states and mutually unbiased bases form complex projective 2-designs. 
We present a derivation of this statement in Sec.~\ref{sec:entropic_proofs}.
However, this proof technique does not allow for establishing stronger uncertainty relations for designs of higher order.

We partially overcome this lack of proof techniques by formulating a linear programming problem whose solution provides a lower bound on the average entropy of measurements in stabilizer bases. 
Unlike the derivation of \eqref{eq:stab_uncertainty}, the moment constraints of higher $t$-designs do feature as constraints in said linear program. 
This allows us to advantageously take into account additional information about the third and fourth moments of multi-qubit stabilizer states. 
We obtain the following uncertainty relation for dimensions $d=2^n$:
\begin{equation}
\frac{d}{N} \sum_{k=1}^{N/d} H \left( \mathcal{B}_k | \rho \right) \geq \log_2 (d) - c (d), \label{eq:entropic_uncertainty}
\end{equation}
with $\lim_{n\to \infty} c \left( d \right) \simeq 0.854 <1$. Similar statements may be formulated for other Clifford orbits. These findings are detailed in \autoref{fig:uncertainty} and provide an affirmative answer to an open problem formulated by Wehner and Winter in \cite{wehner_entropic_2010}:
is it possible to take advantage of  higher design structures $t \geq 3$ when formulating entropic uncertainty relations?

\begin{figure}
\includegraphics[width=\linewidth]{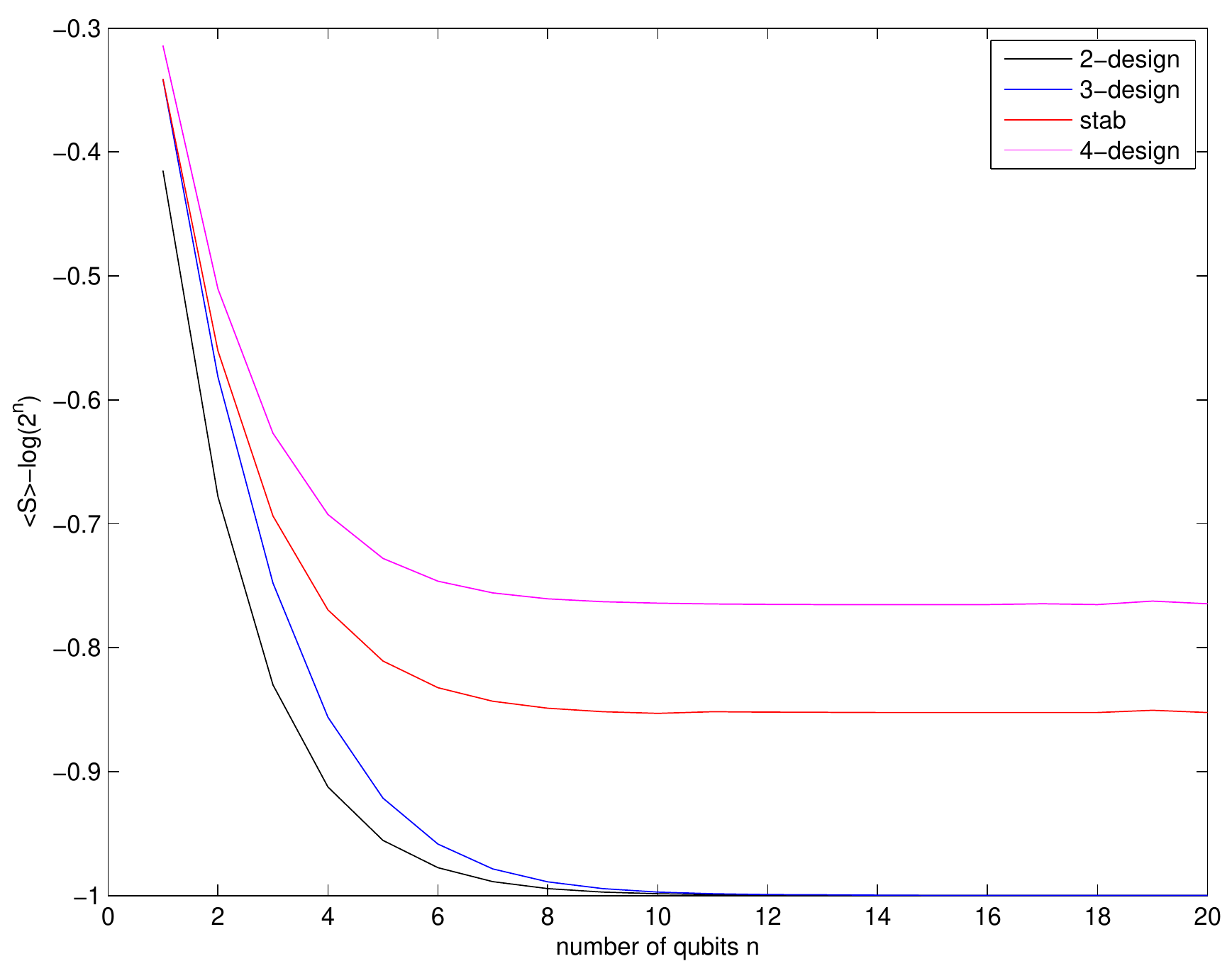}
\caption{Entropic uncertainty relations for stabilizer bases. The red line indicates the value of $-c(d)$ in Eq.~(\ref{eq:entropic_uncertainty}) for dimension $d=2^n$ ranging from $n=1$ to $n=20$. 
We obtain these values via the linear programming approach detailed in Sec.~\ref{sec:entropic_proofs}.
As a reference, corresponding values for 2-designs (black), 3-designs (blue) and 4-designs (magenta) are plotted as well. 
Note that the structure of a 3-design alone does not allow for an asymptotic improvement over \eqref{eq:stab_uncertainty}, as the blue line converges to $-1$ as $n$ increases.
However, taking into account additional information about fourth moments of stabilizer states does lead to consistently better results: $c(d) \simeq 0.854$.
}
\label{fig:uncertainty}
\end{figure}

Following Matthews \emph{et al.} \cite{matthews_distinguishability_2009}, we may also employ knowledge about the third and fourth moments of stabilizer states to obtain entropic bounds in the converse direction.
Introduced by Sanchez-Ruiz \cite{sanchez_improved_1995}, these \emph{certainty relations} provide a lower bound on the information that is accessible via such measurements. 
We refer to \cite{matthews_distinguishability_2009} for further clarification of the terminology used here.
Let us consider an isotropic ensemble $\sum_{x} p_x \rho_x = \frac{1}{d} \mathbb{I} \in \mathcal{S}_d$ of quantum states and a POVM measurement $\mathcal{M}_{C,z}$ that corresponds to an arbitrary Clifford orbit, including stabilizer states.
Then the \emph{Shannon mutual information} between the preparation variable $X$ and the measurement outcome $\mathcal{M}_{C,d}$ obeys
\begin{equation*}
I (X:\mathcal{M}_{C,d}) \geq  \frac{1}{128 \log (2)} \left( \frac{d-1}{d} \right)^2
\end{equation*} 
for any fiducial $z \in \mathbb{C}^d$. Regardless of the particular choice of Clifford orbit, this is a small, but finite, constant that lower bounds the ``accessible information''. For particular Clifford orbits, this bound may be improved further. Clifford POVMs with a 4-design fiducial, for instance, 
admit $I(X:\mathcal{M}_{C,d}) = I(X: \mathcal{M}_{\textrm{4D}}) \geq \frac{1}{18 \log (2)} \left( \frac{d-1}{d} \right)^2$ \cite{matthews_distinguishability_2009}.

Moreover, multi-qubit Clifford  POVMs obey
\begin{equation}
H \left( \mathcal{M}_{C,z} | \phi  \right) \leq \log_2 (N) - \frac{1}{128 \log (2)} \left( \frac{d-1}{d} \right)^2 \label{eq:certainty2}
\end{equation}
for any pure state $\phi = |\phi \rangle \! \langle \phi| \in \mathcal{H}_d$. Again, it is possible to further improve the constant for Clifford orbits with particular structure.

It is worthwhile to compare this relation to a similar one derived by Matthews \emph{et al.} \cite{matthews_distinguishability_2009} for 2-design POVMs $\mathcal{M}_{\mathrm{2D}}: \mathcal{H}_d \to \mathbb{R}^N$:
\begin{equation*}
H \left( \mathcal{M}_{\mathrm{2D}} |\phi  \right) \leq \log (N)- \frac{1}{6 \log (2)}\frac{1}{(d+1)^2}.
\end{equation*}
Note that asymptotically ($d \to \infty$) the Clifford certainty relation \eqref{eq:certainty2} is much tighter than this 2-design analogue. 

These findings highlight that Clifford  POVMs, including 
stabilizer measurements in particular, obey strong uncertainty and certainty relations. 
This agrees with previous studies about entropic uncertainty relations for fixed pairs of stabilizer basis measurements, see e.g.\ \cite{niekamp_entropic_2012}.

\section{\label{sec:Proofs}Proofs of the main technical results}

\subsection{Mathematical preliminaries}
\label{sec:math_prelimi}

Throughout this work we will exclusively consider dimensions $d=2^n$ that are a power of 2. Let $W_1,\ldots,W_{d^2} \in \mathcal{H}_d$ denote the $d^2$ Pauli operators and $\Xi (\cdot)$ the associated characteristic function introduced in \eqref{eq:characteristic_function}. Also, note that $d=2^n$ assures that every $W_k$ is actually a tensor product $W_k = \otimes 
\sigma_{k_1} \otimes \cdots \otimes \sigma_{k_n}$ of single qubit Pauli matrices $\sigma_0,\sigma_1,\sigma_2,\sigma_3 \in \mathcal{H}_2$.  

We endow the vector spaces $\mathbb{C}^{d'}$ and $\mathbb{R}^{d'}$ with the usual $\ell_p$-norms.
On the level of hermitian matrices $X\in \mathcal{H}_d$, let $|X| = \sqrt{ X X^\dagger}$ denote the matrix absolute value. 
We then define the Schatten-$p$-norms to be $\| X \|_p = \left( \mathrm{tr} \left( |X|^p \right) \right)^{1/p}$. These are related via $\| X \|_q \leq \| X \|_p$ for all $X \in \mathcal{H}_d$ and $p \leq q$. Moreover, the trace norm ($p=1$) and the Hilbert-Schmidt norm ($p=2$) obey the following converse relation: $\| X \|_1 \leq \sqrt{ \mathrm{rank}(X)} \| X \|_2$ $\forall X \in \mathcal{H}_d$. 

The main technical prerequisite for \autoref{thm:main} is the following statement.

\begin{theorem}[\cite{other_paper}]  \label{thm:main_technical}
Fix $d=2^n$ and let $\mathcal{M}_{C,z}= \left\{ |x_k \rangle\right\}_{k=1}^N \subseteq \mathbb{C}^d$ be a Clifford orbit with fiducial $|z\rangle \in \mathbb{C}^d$ and $N$ elements. Then 
\begin{equation*}
\frac{1}{N} \sum_{k=1}^N \left( | x_k \rangle \! \langle x_k |\right)^{\otimes 4} = d \binom{d+2}{3}^{-1} \left( \alpha (z) P_1 + \beta (z) P_2 \right),
\end{equation*}
where $P_1, P_2$ are orthogonal projectors that sum up to $P_{\mathrm{Sym}^4}$, $\alpha (z)$ was defined in \eqref{eq:alpha} and
$
\beta (z) = \frac{4(1- \alpha (z))}{(d+4)(d-1)}.
$
Defining $Q= \frac{1}{d^2}\sum_{k=1}^{d^2} W_k^{\otimes 4}$ allows for characterizing the projectors explicitly by
\begin{equation*}
P_1 = P_{\mathrm{Sym}^4} Q \quad\textrm{and} \quad P_2 = P_{\mathrm{Sym}^4} \left( \mathbb{I} - Q \right).
\end{equation*}
\end{theorem}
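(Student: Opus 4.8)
The plan is to recognise the orbit average as a Clifford twirl and then exploit the representation theory of the fourth tensor power of $\mathrm{C}_n$. Set $\rho_z=|z\rangle\!\langle z|$ and $\mathcal{T}(A)=\frac{1}{|\mathrm{C}_n|}\sum_{C\in\mathrm{C}_n}C^{\otimes 4}A(C^\dagger)^{\otimes 4}$. Since averaging over $\mathrm{C}_n$ weights each orbit element equally, $\frac{1}{N}\sum_k(|x_k\rangle\!\langle x_k|)^{\otimes 4}=\mathcal{T}(\rho_z^{\otimes 4})$, so it suffices to identify this operator. Two elementary observations pin down its shape: by left-invariance of the average it commutes with $D^{\otimes 4}$ for every $D\in\mathrm{C}_n$, i.e.\ it lies in the commutant of $\mathrm{C}_n^{\otimes 4}$; and since $\rho_z^{\otimes 4}$ is supported on $\mathrm{Sym}^4(\mathbb{C}^d)$ while permutations of the four tensor legs commute with $C^{\otimes 4}$, the twirl preserves $\mathrm{Sym}^4$. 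Hence $\mathcal{T}(\rho_z^{\otimes 4})$ is a $\mathrm{C}_n$-invariant operator supported on $\mathrm{Sym}^4$.

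The one deep ingredient, which I would import from \cite{other_paper}, is that the space of such operators is exactly two-dimensional, spanned by the commuting orthogonal projectors $P_1=P_{\mathrm{Sym}^4}Q$ and $P_2=P_{\mathrm{Sym}^4}(\mathbb{I}-Q)$ with $Q=\frac{1}{d^2}\sum_kW_k^{\otimes 4}$. The easy half is that $P_1,P_2$ really do lie there: $Q$ is the average of $\bar g\mapsto W_g^{\otimes 4}$, which is an honest representation of the abelian quotient $\bar P_n\cong\mathbb{Z}_2^{2n}$ of the Pauli group (the Pauli cocycle takes values in fourth roots of unity and is killed by $(\cdot)^{\otimes 4}$), so $Q$ is an orthogonal projector; it commutes with $\mathrm{C}_n^{\otimes 4}$ because Cliffords normalise the Paulis up to signs, which again die under the fourth power, and it commutes with leg permutations. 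Thus $P_1+P_2=P_{\mathrm{Sym}^4}$ and both are Clifford-invariant. The hard half---the converse, that permutations and $Q$ together generate the whole fourth-order commutant---is precisely the statement that $Q$ is the only obstruction to $\mathrm{C}_n$ being a unitary $4$-design; this is the main obstacle and I would take it from the companion paper.

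Given the two-dimensionality, I would write $\mathcal{T}(\rho_z^{\otimes 4})=a\,P_1+b\,P_2$ and fix the two scalars by pairing against $Q$ and against $\mathbb{I}$ in the Hilbert--Schmidt inner product. Using that $\mathcal{T}$ is self-adjoint and fixes $Q$, one gets $\tr\!\bigl(Q\,\mathcal{T}(\rho_z^{\otimes 4})\bigr)=\tr(Q\rho_z^{\otimes 4})=\frac{1}{d^2}\sum_k(\langle z|W_k|z\rangle)^4=\alpha(z)$ by \eqref{eq:alpha}; since $QP_1=P_1$ and $QP_2=0$ this equals $a\,\tr(P_1)$, so $a=\alpha(z)/\tr(P_1)$. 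Pairing with the identity, $\tr\!\bigl(\mathcal{T}(\rho_z^{\otimes 4})\bigr)=\tr(\rho_z^{\otimes 4})=1=a\,\tr(P_1)+b\,\tr(P_2)$, hence $b=(1-\alpha(z))/\tr(P_2)$.

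It then remains to compute the two ranks, which I would do by a short permutation-trace count. Expanding $P_{\mathrm{Sym}^4}=\frac{1}{4!}\sum_{\pi\in S_4}P_\pi$ and using $\tr W_k=\tr W_k^3=0$, $\tr W_k^2=\tr W_k^4=d$ for $W_k\neq\mathbb{I}$ (so that only the cycle types $(2,2)$ and $(4)$ contribute) gives $\tr(W_k^{\otimes 4}P_{\mathrm{Sym}^4})=\tfrac{d(d+2)}{8}$, while $W_k=\mathbb{I}$ contributes $\binom{d+3}{4}$; hence $\tr(P_1)=\tr(QP_{\mathrm{Sym}^4})=\frac{1}{d^2}\bigl[\binom{d+3}{4}+(d^2-1)\tfrac{d(d+2)}{8}\bigr]=\binom{d+2}{3}/d$, which yields $a=d\binom{d+2}{3}^{-1}\alpha(z)$. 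Finally $\tr(P_2)=\binom{d+3}{4}-\binom{d+2}{3}/d=\tfrac{(d-1)(d+1)(d+2)(d+4)}{24}$, and substituting into the expression for $b$ and simplifying produces $b=d\binom{d+2}{3}^{-1}\beta(z)$ with $\beta(z)=\tfrac{4(1-\alpha(z))}{(d+4)(d-1)}$, as claimed. I expect the entire difficulty to be concentrated in establishing the two-dimensionality of the invariant space on $\mathrm{Sym}^4$; everything downstream is bookkeeping.
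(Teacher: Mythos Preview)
Your sketch is correct, and in fact the present paper contains no proof of this statement at all: \autoref{thm:main_technical} is imported wholesale from the companion paper \cite{other_paper} and used here as a black box. So there is nothing in this paper to compare your argument against. That said, your outline is precisely the natural route: recognise the orbit average as the Clifford twirl on $\mathrm{Sym}^4$, invoke the structural result that the Clifford commutant there is two-dimensional and spanned by $P_1,P_2$, and then fix the two coefficients by tracing against $Q$ and $\mathbb{I}$. Your identification of the ``hard half'' is accurate---the statement that $P_1,P_2$ span the full invariant subspace is exactly the content of the companion paper---and your downstream bookkeeping checks out: the cycle-type count gives $\tr(P_1)=\tfrac{(d+1)(d+2)}{6}=\binom{d+2}{3}/d$ and $\tr(P_2)=\tfrac{(d-1)(d+1)(d+2)(d+4)}{24}$, from which $a=d\binom{d+2}{3}^{-1}\alpha(z)$ and $b=d\binom{d+2}{3}^{-1}\beta(z)$ follow as stated.
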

Note that $Q$ is a projector that commutes with $P_{\mathrm{Sym}^4}$. In addition,
the right hand side of the statement in \autoref{thm:main_technical} may be rewritten as
\begin{equation}
d \binom{d+2}{3}^{-1}
\left( \left( \alpha (z) - \beta (z) \right) P_1 + \beta (z) P_{\mathrm{Sym}^4} \right). \label{eq:main_technical}
\end{equation}
According to \eqref{eq:alpha1_bound}, the difference between these coefficients obeys
\begin{equation}
 -\frac{2}{d(d+1)} \leq \alpha (z)-\beta (z)\leq \frac{1}{d+4} \label{eq:alphabound2}
\end{equation}

It is insightful to compare this statement to the defining property \eqref{eq:tdesign} of a complex projective 4-design.
\begin{align}
\frac{1}{N} \sum_{k=1}^n (|x_k \rangle \! \langle x_k |)^{\otimes 4}
=& \binom{d+3}{4}^{-1} P_{\mathrm{Sym}^4}.  \label{eq:4design_loc}
\end{align}
From such a comparison it becomes apparent that Clifford orbit fiducials $|z \rangle \in \mathbb{C}^d$ result in a complex projective 4-design, precisely if $\alpha (z)=\frac{4}{d(d+3)}$. Indeed, such a choice assures $\alpha (z) = \beta (z) = \frac{4}{d(d+3)}$ for the constants occurring in \autoref{thm:main_technical}
which in turn implies the defining property \eqref{eq:4design_loc} of a 4-design.

However, \autoref{thm:main_technical} also implies that Clifford orbits in general do not have this very particular behavior and consequently fall short of being complex projective 4-designs.
Fortunately, the deviation from this ideal behavior is benign: the fourth moment average decomposes into exactly two projectors $P_1,P_2$ instead of a single one, namely $P_{\mathrm{Sym}^4}$. 
As we shall see, this deviation is mild enough to adapt the proof technique from the 4-design statement by Ambainis and Emerson \cite{ambainis_quantum_2007} (see also \cite[Section 2.B]{matthews_distinguishability_2009} and \cite{lancien_distinguishing_2013}) to Clifford orbits.

\subsection{A novel bound for 4-design POVMs}

In this section, we present a slight improvement over previous results regarding distinguishability of quantum states via 4-design POVMs. Its proof outline will serve as a guideline for the derivation of our main technical result: \autoref{thm:main}.

\begin{theorem}[Performance of 4-designs] \label{thm:4design}
Let $\mathcal{M}_{\mathrm{4D}}$ be a 4-design POVM. Then
\begin{equation}
\| \mathcal{M}_{\mathrm{4D}}(X) \|_{\ell_1} > \frac{0.32}{\sqrt{ \mathrm{rank}(X)}} \| X \|_1 \quad \forall X \in \mathcal{H}_d. \label{eq:4design_bound}
\end{equation}
This in particular implies that the distinguishability constant \eqref{eq:relation_constant} obeys 
$
\lambda_{\mathcal{M}_{\textrm{4D}}} > \frac{0.32}{\sqrt{ d}}$. If $X$ has rank two, then the constant $0.32$ may be further improved to $\frac{1}{\sqrt{6.06}}> 0.4$.
\end{theorem}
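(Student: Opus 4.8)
## Proof Strategy for Theorem~\ref{thm:4design}

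The plan is to follow the moment method of Ambainis and Emerson, bounding the $\ell_1$-norm of the outcome vector from below via a reverse-H\"older argument. Write $p = \mathcal{M}_{\mathrm{4D}}(X)$, with components $p_k = \frac{d}{N}\tr(|x_k\rangle\!\langle x_k| X)$. The starting point is the elementary inequality $\|p\|_{\ell_1}^3 \geq \|p\|_{\ell_2}^4 / \|p\|_{\ell_4}^2$ (which follows from H\"older applied to $|p_k| = |p_k|^{1/3}\cdot|p_k|^{2/3}$, or more transparently from the two-term interpolation $\|p\|_2 \leq \|p\|_1^{1/3}\|p\|_4^{2/3}$). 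Hence it suffices to lower-bound the second moment $\|p\|_{\ell_2}^2$ and upper-bound the fourth moment $\|p\|_{\ell_4}^2$, both expressed as moments of the design.

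The second moment is $\|p\|_{\ell_2}^2 = \frac{d^2}{N^2}\sum_k \big(\tr((|x_k\rangle\!\langle x_k|) X)\big)^2 = \frac{d^2}{N}\tr\big((\frac1N\sum_k |x_k\rangle\!\langle x_k|^{\otimes 2})\, X^{\otimes 2}\big)$. Using the $2$-design property (implied by \eqref{eq:tdesign} with $t=2$), $\frac1N\sum_k |x_k\rangle\!\langle x_k|^{\otimes 2} = \binom{d+1}{2}^{-1}P_{\mathrm{Sym}^2}$, and $\tr(P_{\mathrm{Sym}^2}X^{\otimes 2}) = \frac12(\tr(X)^2 + \|X\|_2^2)$, so this reduces to a clean expression in $\|X\|_2^2$ and $\tr(X)^2 \geq 0$, giving a lower bound proportional to $\|X\|_2^2$. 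Similarly, the fourth moment $\|p\|_{\ell_4}^4 = \frac{d^4}{N^4}\sum_k \big(\tr(|x_k\rangle\!\langle x_k| X)\big)^4 = \frac{d^4}{N}\binom{d+3}{4}^{-1}\tr(P_{\mathrm{Sym}^4}X^{\otimes 4})$ by the $4$-design property \eqref{eq:4design_loc}, and $\tr(P_{\mathrm{Sym}^4}X^{\otimes 4})$ expands into the $15$ pairings of four copies, i.e.\ a symmetric polynomial in $\tr(X)$, $\tr(X^2)$, $\tr(X^3)$, $\tr(X^4)$; dropping the (non-negative) terms carrying factors of $\tr(X)$ and bounding the remainder gives $\tr(P_{\mathrm{Sym}^4}X^{\otimes 4}) \leq C\,\|X\|_2^4$ for an explicit $C$. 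Assembling these, the $\|X\|_2$ dependence cancels, leaving $\|p\|_{\ell_1} \geq \mathrm{const}\cdot\|X\|_2$, and then $\|X\|_2 \geq \|X\|_1/\sqrt{\mathrm{rank}(X)}$ converts this to the stated bound.

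The main obstacle — and the only place real care is needed — is bookkeeping the $15$-term expansion of $\tr(P_{\mathrm{Sym}^4}X^{\otimes 4})$ together with the exact combinatorial prefactors $\binom{d+k-1}{k}^{-1}$, and then optimizing the constants to land at $0.32$ rather than something weaker. For general rank this means tracking how $\tr(X^3)$ and $\tr(X^4)$ compare to $\|X\|_2^4$ in the worst case; for the rank-two improvement, one can diagonalize $X = a|0\rangle\!\langle0| + b|1\rangle\!\langle1|$ and reduce everything to a two-variable optimization in $(a,b)$ — there the polynomials $\tr(X^k)$ are explicit and the ratio $\|p\|_{\ell_4}^2/\|p\|_{\ell_2}^2$ can be maximized in closed form, yielding the sharper constant $1/\sqrt{6.06}$. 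I would also double-check whether the $1/N$ factors and the POVM normalization $d/N$ are consistent with \autoref{def:clifford} so that the final constant is dimension-independent as claimed; any residual $d$-dependence would have to be bounded uniformly, but the design identities make all such factors cancel exactly.
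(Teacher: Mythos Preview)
Your overall framework (Berger's moment inequality, second moment from the $2$-design property, fourth moment from the $4$-design property) matches the paper's approach. The rank-two case is also handled the same way: the paper reduces to the eigenvalues $(1,x)$ and optimizes a one-variable rational function (\autoref{lem:4mIneq3}), which is exactly your two-variable optimization modulo homogeneity.

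There is, however, a genuine gap in your treatment of general $X$. You propose to lower-bound the second moment by discarding the $\tr(X)^2$ contribution, and to upper-bound the fourth moment by ``dropping the (non-negative) terms carrying factors of $\tr(X)$'' so that $\tr(P_{\mathrm{Sym}^4}X^{\otimes 4})\leq C\,\|X\|_2^4$ for an absolute constant $C$. Neither step is safe. First, the cross term $8\,\tr(X)\tr(X^3)$ in the expansion
\[
24\,\tr(P_{\mathrm{Sym}^4}X^{\otimes 4})=\tr(X)^4+6\tr(X)^2\tr(X^2)+8\tr(X)\tr(X^3)+3\tr(X^2)^2+6\tr(X^4)
\]
is not sign-definite (e.g.\ $X=\mathrm{diag}(2,2,-3)$ gives $\tr(X)\tr(X^3)=-11$), so you cannot simply drop it in an upper bound. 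Second, and more fatally, even the term $\tr(X)^4$ alone destroys a bound of the form $C\,\|X\|_2^4$: for $X=\mathbb{I}$ one has $\tr(X)^4=d^4$ while $\|X\|_2^4=d^2$, so any such $C$ must scale like $d^2$. Consequently the ratio $\mathbb{E}[S^2]^3/\mathbb{E}[S^4]$ would acquire dimension dependence and the stated constant $0.32$ would not follow.

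The paper's fix is precisely \emph{not} to discard the $\tr(X)^2$ piece of the second moment. Keeping the exact expression $\mathbb{E}[S^2]=\frac{\|X\|_2^2+\tr(X)^2}{d(d+1)}$, one bounds the fourth moment \emph{relative to} $(\|X\|_2^2+\tr(X)^2)^2$ rather than to $\|X\|_2^4$. The key technical lemma (\autoref{lem:4mIneq}) shows
\[
\frac{24\,\tr(P_{\mathrm{Sym}^4}X^{\otimes 4})}{(\|X\|_2^2+\tr(X)^2)^2}\leq 3+\frac{6+8y-2y^4}{(1+y^2)^2}\leq 10.1,
\qquad y:=\frac{|\tr(X)|}{\|X\|_2},
\]
obtained by estimating $|\tr(X^3)|\leq\|X\|_3^3\leq\|X\|_2^3$ and $\tr(X^4)\leq\|X\|_2^4$ and then optimizing over the single parameter $y\in[0,\infty)$. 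A companion inequality (\autoref{lem:4mIneq2}) with an extra factor of $\|X\|_2^2/(\|X\|_2^2+\tr(X)^2)$ sharpens the constant to $9.673$, yielding $\|\mathcal{M}(X)\|_{\ell_1}\geq\|X\|_2/\sqrt{9.673}>0.32\,\|X\|_2$. The cancellation you anticipated between $\|X\|_2$ factors does occur, but only after the $\tr(X)$ contributions in numerator and denominator have been balanced against each other; throwing them away separately loses exactly this.

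Two minor remarks: your reverse H\"older inequality should read $\|p\|_{\ell_1}\geq\|p\|_{\ell_2}^3/\|p\|_{\ell_4}^2$ (your stated version is not homogeneous), and the expansion of $\tr(P_{\mathrm{Sym}^4}X^{\otimes 4})$ runs over the $24$ permutations in $S_4$ (grouped by cycle type), not $15$ pairings.
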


The original statements in \cite{ambainis_quantum_2007,matthews_distinguishability_2009} require $X$ to be traceless, while \cite{lancien_distinguishing_2013} affords a slightly smaller constant constant of $\frac{1}{\sqrt{18}}$.
Also, $X$ having rank two encompasses the case of distinguishing two pure quantum states. Our statement provides a tighter constant for this particularly relevant special case.

At the heart of the proof of \autoref{thm:4design} (see e.g.\ \cite{ambainis_quantum_2007,matthews_distinguishability_2009}) is the following moment inequality by Berger \cite{berger_fourth_1997}:
\begin{equation}
\mathbb{E} \left[ \left| S \right| \right]
\geq \sqrt{ \frac{ \mathbb{E} \left[ S^2 \right]^3}{\mathbb{E} \left[ S^4 \right]}}. \label{eq:berger}
\end{equation}
It is valid for any real valued random variable $S$. 

Now, let $\mathcal{M}_{\textrm{4D}}=\left\{ \frac{d}{N} |x_k \rangle\! \langle x_k | \right\}_{k=1}^N$ be a 4-design POVM, fix $X \in \mathcal{H}_d$ arbitrary and define the $N$-variate random variable
\begin{equation}
S_X = \langle x_k | X |x_k \rangle \quad \textrm{with probability} \quad \frac{1}{N}. \label{eq:S}
\end{equation}
Accordingly,
\begin{align}
\| \mathcal{M}_{\textrm{4d}}(X) \|_{\ell_1}
=&
\frac{d}{N} \sum_{k=1}^N \left| \langle x_k | X |x_k \rangle \right|=  d \mathbb{E} \left[ \left| S_X \right| \right] \nonumber \\
\geq &   d \sqrt{ \frac{ \mathbb{E} \left[ S_X^2 \right]^3}{\mathbb{E} \left[ S_X^4 \right]}}. \label{eq:4design_aux1}
\end{align}
So in order to establish \autoref{thm:4design}, it suffices to bound the moments $\mathbb{E} \left[ S_X^2 \right]$, as well as $\mathbb{E} \left[ S_X^4\right]$ appropriately.
Since any complex projective 4-design in particular also constitutes a 2-design, the first quantity amounts to
\begin{align}
\mathbb{E} \left[ S_X^2 \right]
=& \frac{1}{N} \sum_{k=1}^N \mathrm{tr} \left( |x_k \rangle \! \langle x_k | X\right)^2 \nonumber \\
=& \mathrm{tr} \left( \frac{1}{N} \sum_{k=1}^N \left( |x_k \rangle \! \langle x_k | \right)^{\otimes 2} X^{\otimes 2} \right) \nonumber \\
=& \binom{d+1}{2}^{-1} \mathrm{tr} \left( P_{\mathrm{Sym}^2} X^{\otimes 2} \right) \nonumber \\
=& \frac{ \mathrm{tr} \left( X^2 \right) + \mathrm{tr}(X)^2}{(d+1)d}, \label{eq:2moment_bound}
\end{align}
where the last equation follows from $P_{\mathrm{Sym}^2} = \frac{1}{2} \left( \mathbb{I} + \mathbb{F} \right)$ with $\mathbb{F}$ denoting the Flip-operator on a bi-partite system (see e.g. \cite[Lemma 6]{gross_partial_2015}, or \cite[Lemma 17]{kueng_low_2015}).

For a corresponding upper bound on $\mathbb{E} \left[ S^4 \right]$, the 4-design property of the POVM is of crucial importance. Without requiring further assumptions, Eq.~\eqref{eq:4design} assures
\begin{align}
 \mathbb{E} \left[ S_X^4 \right] 
=& \mathrm{tr} \left( \frac{1}{N} \sum_{k=1}^N \left( |x_k \rangle \! \langle x_k | \right)^{\otimes 4} X^{\otimes 4} \right) \nonumber \\
=& \binom{d+3}{4}^{-1} \mathrm{tr} \left( P_{\mathrm{Sym}^4} X^{\otimes 4} \right) \nonumber \\
\leq& \frac{10.1\left( \mathrm{tr} \left( X^2 \right) + \mathrm{tr}(X)^2\right)^2}{d(d+1)(d+2)(d+3)} \nonumber\\
= & \frac{10.1 d(d+1) }{(d+2)(d+3)}\mathbb{E} \left[ S^2 \right]^2, \label{eq:4moment_bound}
\end{align}
where the inequality follows from \autoref{lem:4mIneq} in the appendix. Here we content ourselves to state that standard techniques such as \cite[Lemma 17]{kueng_low_2015} allow for evaluating 
$\mathrm{tr} \left( P_{\mathrm{Sym}^4} X^{\otimes 4} \right)$ explicitly without requiring $X$ to have vanishing trace. Similar techniques were also employed in \cite{lancien_distinguishing_2013}.
Earlier approaches, such as Refs.~\cite{ambainis_quantum_2007,matthews_distinguishability_2009}, made the assumption $\mathrm{tr}(X)=0$ to considerably simplify the evaluation of  $\mathbb{E} \left[ S_X^4 \right]$.
Inserting these bounds into \eqref{eq:4design_aux1} reveals
\begin{align*}
& \| \mathcal{M}_{\textrm{4d}}(X) \|_{\ell_1}
\geq d \sqrt{ \frac{ \mathbb{E} \left[ S^2 \right]^3}{\mathbb{E} \left[ S^4 \right]}}\\
=&\sqrt{\frac{(d+2)(d+3)}{(d+1)^2}
		\frac{(\|X\|_2^2+\tr(X)^2)^3}{24\|X\|_2^2 \tr(P_{\mathrm{Sym}^4} X^{\otimes 4}) }}\|X\|_2\nonumber\\
\geq&\sqrt{\frac{(\|X\|_2^2+\tr(X)^2)^3}{24\|X\|_2^2 \tr(P_{\mathrm{Sym}^4} X^{\otimes 4}) }}\|X\|_2\nonumber\\		
\geq&  \frac{1}{\sqrt{9.673}} \| X \|_2
> \frac{0.32}{\sqrt{ \mathrm{rank}(X)}} \| X \|_1,
\end{align*}
where the third inequality follows from \autoref{lem:4mIneq2} in the appendix. 
Since the choice of $X \in \mathcal{H}_d$ is arbitrary, \eqref{eq:4design_bound} in  \autoref{thm:4design} readily follows.

To derive the tighter bound valid for rank-two matrices note that
\begin{align*}
\| \mathcal{M}_{\textrm{4d}}(X) \|_{\ell_1}
&\geq\sqrt{\frac{(\|X\|_2^2+\tr(X)^2)^3}{24\|X\|_1^2 \tr(P_{\mathrm{Sym}^4} X^{\otimes 4}) }}\|X\|_1\nonumber\\		
&\geq  \frac{1}{\sqrt{12.12}} \| X \|_1
> \frac{0.4 \| X \|_1}{\sqrt{\mathrm{rank}(X)}},
\end{align*} 
where the second inequality follows from \autoref{lem:4mIneq3} in the appendix.
If $X$ is both rank two and traceless, \autoref{lem:4mIneq3} actually implies
\begin{align}
\frac{24\| X \|_1^2\tr \left( P_{\mathrm{Sym}^4} X^{\otimes 4} \right)}{ [\| X \|_2^2 + \tr (X)^2]^3}=12.
\end{align}
This allows for further improving the constant $\frac{1}{\sqrt{12.12}}$ to $\frac{1}{2\sqrt{3}}$.

\subsection{A bound for Clifford POVMs}

Now let us move on to prove \autoref{thm:main}---a similar statement for Clifford POVMs. 
Fix $d=2^n$ and let $\mathcal{M}_{C,z} = \left\{ \frac{d}{N} |x_k \rangle \! \langle x_k |\right\}_{k=1}^N$ be a Clifford orbit POVM with fiducial $|z\rangle \in \mathbb{C}^d$.
We fix $X \in \mathcal{H}_d$ and define the random variable $S_X$ in analogy to \eqref{eq:S}. Similar to before, doing so assures
\begin{equation*}
\| \mathcal{M}_{C,z}(X) \|_{\ell_1} = d \mathbb{E} \left[ |S_X| \right] \geq d \sqrt{\frac{ \mathbb{E} \left[ S^2_X \right]^3}{\mathbb{E} \left[ S^4_X \right]}}
\end{equation*}
via Berger's inequality.
As already pointed out in Sec.~\ref{sec:results}, any Clifford orbit does constitute a complex projective 3-design. This in turn implies that \eqref{eq:2moment_bound} remains valid, because its derivation just requires a 2-design structure:
\begin{equation}
\mathbb{E} \left[ S^2_X \right] = \frac{ \| X \|_2^2 + \tr (X)^2}{(d+1)d}. \label{eq:clifford_2moment}
\end{equation}
However, deriving a corresponding bound for $\mathbb{E} \left[ S_X^4 \right]$ is considerably more challenging. This is because Clifford orbits in general fall short of being complex projective 4-designs.
Instead, we resort to Eq.~\eqref{eq:main_technical} which implies
\begin{align}
&\mathbb{E} \left[ S_X^4 \right]
= \tr \left( \frac{1}{N} \left( |x_k \rangle \! \langle x_k | \right)^{\otimes 4} X^{\otimes 4} \right) \nonumber  \\
=& d \binom{d+2}{3}^{-1}
\left( \left( \alpha (z) -\beta (z) \right) \tr \left( P_1 X^{\otimes 4} \right) \right. \nonumber \\
+& \left. \beta (z) \tr \left( P_{\mathrm{Sym}^4} X^{\otimes 4} \right) \right), \label{eq:main_aux1}
\end{align}
where $P_1 \in \mathcal{H}_d^{\otimes 4}$ and $\beta (z)$ were introduced in \autoref{thm:main_technical}.
A bound on $\mathrm{tr} \left( P_{\mathrm{Sym}^4} X^{\otimes 4} \right)$ was already obtained in the previous subsection, see \eqref{eq:4moment_bound}.
For the remaining term, we obtain
\begin{align*}
\tr (P_1 X^{\otimes 4})
=& \tr \left( P_1 \left| X^{\otimes 4} \right| \right)
= \tr \left( P_{\mathrm{Sym}^4} Q |X|^{\otimes 4} \right) \\
\leq & \tr \left( Q |X|^{\otimes 4} \right)
= \frac{1}{d^2} \sum_{k=1}^{d^2} \tr \left( W_k^{\otimes 4} |X|^{\otimes 4} \right) \\
=& \frac{1}{d^2} \sum_{k=1}^{d^2} \tr \left( W_k |X| \right)^4
\end{align*}
by invoking some standard trace inequalities.
Hoelder's inequality together with the fact that the characteristic function \eqref{eq:characteristic_function} is proportional to an isometry ($\| \Xi (X) \|_{\ell_2}^2 = d \| X \|_2$) allows us to simplify further:
\begin{align}
\tr (P_1 X^{\otimes 4})
\leq &  \frac{1}{d^2} \sum_{k=1}^{d^2} \tr \left( W_k |X| \right)^4 \nonumber \\
\leq&  \frac{1}{d^2} \sum_{k=1}^{d^2} \| X \|_1^2 \| W_k \|_\infty^2 \tr \left( W_k |X| \right)^2 \nonumber \\
=& \frac{ \| X \|_1^2}{d^2} \| \Xi (|X|) \|_{\ell_2}^2
= \frac{ \| X \|_1^2 \| X \|_2^2}{d}. %\label{eq:main_aux2}
\end{align}
The last equation is due to the fact that the Schatten-$p$ norms of $X$ and that of $|X|$ coincide by definition. Together with \eqref{eq:main_aux1}, this relation implies  
\begin{align}
\mathbb{E} \left[ S_X^4 \right]
\leq &  \binom{d+2}{3}^{-1} |\alpha (z)-\beta (z)| \| X \|_1^2 \| X \|_2^2 \nonumber \\
+& \frac{24}{(d+4)(d+1)^2 d} \tr \left( P_{\mathrm{Sym}^4} X^{\otimes 4} \right), \label{eq:clifford_4moment}
\end{align}
where we have used
\begin{align*}
d \binom{d+2}{3}^{-1} \beta (z) =& \frac{24(1- \alpha (z))}{(d+4)(d+2)(d+1)(d-1)} \\
\leq & \frac{24}{(d+4)(d+1)^2 d},
\end{align*}
which is due to \eqref{eq:alpha1_bound}.
Combining this fourth moment bound \eqref{eq:clifford_4moment} with the second moment bound from \eqref{eq:clifford_2moment} implies
\begin{align}
\| \mathcal{M}_{C,z}(X) \|_{\ell_1}
\geq &  d \sqrt{ \frac{ \mathbb{E} \left[ S^2_X \right]^3}{\mathbb{E} \left[ S^4_X \right]}}  
\geq  \frac{\| X \|_2}{\sqrt{\kappa(X,z)}}, \label{eq:POVMnormConstants}
\end{align}
where
\begin{widetext}
\begin{align}%\label{eq:KXz}
\kappa(X,z)=
&\frac{\frac{6(d+1)^2}{d+2}|\alpha (z)-\beta (z)| \| X \|_1^2 \| X \|_2^4+\frac{24(d+1)}{d+4 }\| X \|_2^2\tr \left( P_{\mathrm{Sym}^4} X^{\otimes 4} \right)}{ (\| X \|_2^2 + \tr (X)^2)^3} \nonumber\\
\leq& \frac{\frac{6d(d+1)^2}{(d+2)(d+4)}\alpha (z) \| X \|_1^2 \| X \|_2^4+\frac{24(d+1)}{d+4 }\| X \|_2^2\tr \left( P_{\mathrm{Sym}^4} X^{\otimes 4} \right)}{ (\| X \|_2^2 + \tr (X)^2)^3} \nonumber \\
\leq & \frac{d+1}{d+4}\left(6d\alpha (z) \frac{\| X \|_1^2 }{\| X \|_2^2}+9.673\right)
\leq 6d \alpha (z) r_{\mathrm{eff}}(X)+10. \label{eq:kappa}
\end{align}
\end{widetext}
Here, the second inequality follows from  \autoref{lem:4mIneq2} in the appendix and the last one exploits the definition of effective rank: $r_{\mathrm{eff}}(X) = \frac{ \| X \|_1^2}{\| X \|_2^2}$.
Inserting this bound into \eqref{eq:POVMnormConstants} yields \autoref{thm:main}.

As already pointed out in Sec.~\ref{sec:results}, typical random fiducials $|z\rangle \in \mathbb{C}^d$ obey 
\begin{equation}
\alpha (z) \leq \frac{6}{(d+3)d}, \label{eq:generic_characteristic}
\end{equation}
see also \cite{other_paper}. Such a constraint allows for a considerable improvement:

\begin{proposition} \label{thm:main_generic}
	Fix $d=2^n$ and let $\mathcal{M}_{C,z}$ be a Clifford POVM whose fiducial $|z \rangle \in \mathbb{C}^d$ obeys \eqref{eq:generic_characteristic}. Then
	\begin{align*}
\| \mathcal{M}_{C,z}(X) \|_{\ell_1}
	\ge \frac{\| X \|_1}{\sqrt{22r_{\mathrm{eff}}(X)}}
	\end{align*}
	for any $X \in \mathcal{H}_d$. %Here the constant 22 may be replaced by 21 if $X$ is traceless.
\end{proposition}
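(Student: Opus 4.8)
The plan is to follow the same strategy that produced \autoref{thm:main}, but to track the constants more carefully under the additional hypothesis $\alpha(z) \le \frac{6}{(d+3)d}$. Concretely, I would start from the chain \eqref{eq:POVMnormConstants}, which already gives $\| \mathcal{M}_{C,z}(X) \|_{\ell_1} \ge \| X \|_2 / \sqrt{\kappa(X,z)}$ with $\kappa(X,z)$ as displayed in \eqref{eq:kappa}. The point of the proposition is that when $d\alpha(z)$ is bounded by a constant of order $1/d$ (rather than the generic $1/\sqrt{d}$ guaranteed by \eqref{eq:alpha1_bound}), the term $6 d \alpha(z) r_{\mathrm{eff}}(X)$ appearing in Theorem~\ref{thm:main} is itself bounded by a constant times $r_{\mathrm{eff}}(X)/d$, so it never dominates and we obtain a clean bound proportional to $1/\sqrt{r_{\mathrm{eff}}(X)}$, matching the 4-design scaling up to the multiplicative constant.

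First I would substitute the hypothesis into the second line of \eqref{eq:kappa}. Replacing $\alpha(z)$ by $\frac{6}{(d+3)d}$ in the coefficient $\frac{6d(d+1)^2}{(d+2)(d+4)}\alpha(z)$ collapses it to $\frac{36(d+1)^2}{(d+2)(d+3)(d+4)}$, which is bounded by a constant (a little below $36/d$, and certainly $\le 34$ for all $d \ge 1$; one should check the small $d$ cases since the whole point is a uniform constant). Then the first summand in the numerator of $\kappa$ is at most (that constant)$\,\cdot\, \| X \|_1^2 \| X \|_2^4 / (\| X\|_2^2 + \tr(X)^2)^3 \le$ (that constant)$\,\cdot\, \| X\|_1^2 \| X\|_2^4 / \| X\|_2^6 = $ (that constant)$\,\cdot\, r_{\mathrm{eff}}(X)$, after dropping the nonnegative $\tr(X)^2$ terms. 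For the second summand I reuse the estimate $\frac{24\|X\|_2^2 \tr(P_{\mathrm{Sym}^4}X^{\otimes 4})}{(\|X\|_2^2+\tr(X)^2)^3} \le 9.673 \cdot (\text{something}) \le$ a constant multiple of $r_{\mathrm{eff}}(X)$, exactly as in \autoref{lem:4mIneq2}; note $r_{\mathrm{eff}}(X)\ge 1$ always, so an additive constant can be absorbed into a multiple of $r_{\mathrm{eff}}(X)$. Collecting both pieces, $\kappa(X,z) \le 22\, r_{\mathrm{eff}}(X)$, and then \eqref{eq:POVMnormConstants} gives $\| \mathcal{M}_{C,z}(X) \|_{\ell_1} \ge \| X \|_2 / \sqrt{22 r_{\mathrm{eff}}(X)} = \| X \|_1 / \sqrt{22 r_{\mathrm{eff}}(X)}$, using $\| X \|_1 = \sqrt{r_{\mathrm{eff}}(X)}\,\| X \|_2$ by the definition of effective rank.

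The main obstacle — really a bookkeeping obstacle rather than a conceptual one — is pinning down the numerical constant $22$ uniformly in $d$. The coefficient $\frac{36(d+1)^2}{(d+2)(d+3)(d+4)}$ and the factor $\frac{d+1}{d+4}$ in front of the $9.673$ term both vary with $d$; one must verify their suprema over $n \ge 1$ (equivalently $d \ge 2$), and in particular that the sum of the worst-case first term and the worst-case $9.673$-scaled second term stays below $22$. The cleanest route is: bound $\frac{d+1}{d+4}\le 1$, bound $\frac{36(d+1)^2}{(d+2)(d+3)(d+4)}$ by its maximum over $d\ge 2$, invoke \autoref{lem:4mIneq2} to convert the $P_{\mathrm{Sym}^4}$ term into $9.673\,r_{\mathrm{eff}}(X)$ plus lower-order terms absorbed via $r_{\mathrm{eff}}(X)\ge 1$, and check that the resulting explicit constant is $\le 22$. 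A secondary point worth a sentence is that one should use the sharper first line of \eqref{eq:kappa} together with $|\alpha(z)-\beta(z)| \le \alpha(z) \le \frac{6}{(d+3)d}$ if the crude second line of \eqref{eq:kappa} does not by itself give a good enough constant; both $\alpha(z)$ and $\beta(z)$ are nonnegative and $\beta(z) = \frac{4(1-\alpha(z))}{(d+4)(d-1)}$ is itself $O(1/d^2)$, so either bookkeeping path closes.
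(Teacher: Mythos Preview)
Your plan has the right skeleton but contains a fatal algebra slip in the last step, and that slip hides a genuine gap in the argument. You conclude $\kappa(X,z)\le 22\,r_{\mathrm{eff}}(X)$ and then write
\[
\frac{\|X\|_2}{\sqrt{22\,r_{\mathrm{eff}}(X)}}=\frac{\|X\|_1}{\sqrt{22\,r_{\mathrm{eff}}(X)}}.
\]
This is false: by definition $\|X\|_2=\|X\|_1/\sqrt{r_{\mathrm{eff}}(X)}$, so your bound only yields $\|\mathcal{M}_{C,z}(X)\|_{\ell_1}\ge \|X\|_1/(\sqrt{22}\,r_{\mathrm{eff}}(X))$, which is the weak ``2-design-type'' scaling, not the claimed $\|X\|_1/\sqrt{22\,r_{\mathrm{eff}}(X)}$. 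The mismatch is not cosmetic; bounding $\kappa$ by a multiple of $r_{\mathrm{eff}}(X)$ can never give the proposition.

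What you need is $\kappa(X,z)\le 22$, a constant independent of $X$. The piece you noticed but then discarded is exactly what makes this work: the coefficient in front of $\|X\|_1^2\|X\|_2^4$ is $O(1/d)$ (you wrote ``a little below $36/d$'' and then replaced it by $34$). Keeping that $1/d$ and using $r_{\mathrm{eff}}(X)\le d$ turns the first summand into an absolute constant; combined with the $9.673$ from \autoref{lem:4mIneq2} you get a uniform bound on $\kappa$. Concretely, the paper works from the \emph{first} line of \eqref{eq:kappa}: the hypothesis $\alpha(z)\le \tfrac{6}{d(d+3)}$ gives $|\alpha(z)-\beta(z)|\le \tfrac{2}{d(d+1)}$, hence the first-term coefficient is at most $\tfrac{12}{d}$, so the first summand contributes $\le \tfrac{12}{d}r_{\mathrm{eff}}(X)\le 12$ and $\kappa\le 12+10=22$. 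If you insist on using the second line of \eqref{eq:kappa} with $\alpha(z)$ directly, the coefficient is only $\lesssim 36/d$, which would give $\kappa\lesssim 46$ rather than $22$; so your ``secondary point'' about going back to $|\alpha(z)-\beta(z)|$ is in fact the primary route.
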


\begin{proof}
Inequality~\eqref{eq:generic_characteristic} assures 
\begin{equation*}
-\frac{2}{d(d+1)}\leq \alpha (z)-\beta(z)\leq \frac{2}{(d-1)(d+4)}.
\end{equation*}
In turn, $\kappa (X,z)$ featuring in \eqref{eq:POVMnormConstants} may be bounded by
\begin{align*}
\kappa(X,z) 
\leq &  \frac{ \frac{12\| X \|_1^2}{d\| X \|_2^2}\| X \|_2^6+24\| X \|_2^2\tr \left( P_{\mathrm{Sym}^4} X^{\otimes 4} \right)}{ [\| X \|_2^2 + \tr (X)^2]^3} \\
\leq & \frac{12 r_{\mathrm{eff}} (X)}{d} + 10 \leq 22,
\end{align*}
and the statement readily follows.
\end{proof}

\begin{comment}
\begin{widetext}
The above two equations confirm \autoref{thm:main}.  If $X$ is traceless, the bound on $K(X,z)$ can be improved slightly, 
\begin{align}\label{eq:KXz}
\kappa(X,z)=
\leq \frac{d+1}{d+4}\left(6d\alpha_1 (z) \frac{\| X \|_1^2 }{\| X \|_2^2}+9\right)
\leq 6d\alpha_1 (z) \frac{\| X \|_1^2 }{\| X \|_2^2}+9\leq 6d\alpha_1 (z) \rank(X)+9.
\end{align}

If $\| w (|z \rangle \! \langle z | ) \|_{\ell_4}^4=\alpha_1(z)\leq 6/(d(d+3))$, then 
\begin{equation}
-\frac{2}{d(d+1)}\leq \alpha_1(z)-\alpha_2(z)\leq \frac{2}{(d-1)(d+4)}.
\end{equation}
Therefore, 
\begin{align*}
\kappa(X,z)&\leq 
\frac{ \frac{12\| X \|_1^2}{d\| X \|_2^2}\| X \|_2^6+24\| X \|_2^2\tr \left( P_{\mathrm{Sym}^4} X^{\otimes 4} \right)}{ [\| X \|_2^2 + \tr (X)^2]^3}\leq \frac{12\| X \|_1^2}{d\| X \|_2^2}+10\leq \frac{12\rank(X)}{d}+10\leq
22,
\end{align*}
from which \autoref{thm:mains} follows.  If $X$ is traceless, the bound on $K(X,z)$ can be improved slightly, 
\begin{align*}
\kappa(X,z)&\leq \frac{12\| X \|_1^2}{d\| X \|_2^2}+9\leq \frac{12\rank(X)}{d}+9\leq
21.
\end{align*}
\end{widetext}
\end{comment}

\subsection{Proof of the converse bound: \autoref{thm:converse}}

\autoref{thm:converse} provides a converse bound to \autoref{thm:main}.
At the heart of its proof is the fact that by definition the multi-qubit Clifford group is the normalizer of the Pauli group $P(d)=\left\{ \pm W_k, \pm i W_k \right\}_{k=1}^{d^2}$ and it acts transitively on Pauli operators up to overall phase factors. 
This fact in particular implies that 
\begin{align*}
\| \mathcal{M}_{C,z}(W) \|_{\ell_1}
=& \frac{d}{|\mathrm{C}_n|} \sum_{j=1}^{|\mathrm{C}_n|} \left| \langle C_j z | W| C_j z\rangle \right| \\
=& \frac{d}{|\mathrm{C}_n|} \sum_{j=1}^{|\mathrm{C}_n|} \left| \langle z| C_j^\dagger W C_j |z \rangle \right| \\
=& \frac{d}{d^2-1} \sum_{k=2}^{d^2} \left| \langle z |W_k | z \rangle \right|.
\end{align*}
Using $\langle z | W_1 | z \rangle = \langle z | z \rangle =1$ and the definition  of the characteristic function in \eqref{eq:characteristic_function}, this expression amounts to
\begin{align*}
\| \mathcal{M}_{C,z}(W) \|_{\ell_1}
=& \frac{ d(\sum_{k=1}^{d^2} \left| \mathrm{tr} \left( W_k |z \rangle \! \langle z | \right) \right| -1) }{d^2-1} \\
=& \frac{ d \left( \| \Xi (|z \rangle \! \langle z|) \|_{\ell_1} -1 \right)}{d^2-1} \\
=& \frac{ \|\Xi (|z \rangle \! \langle z| ) \|_{\ell_1}-1}{(d+1)(d-1)} \| W \|_1,
\end{align*}
because $\| W \|_1 = d$ for any Pauli matrix.

This pre-factor can be related to $\alpha (z)$ which is the main figure of merit in \autoref{thm:main}. Indeed,
\begin{equation*}
 \|\Xi (|z \rangle \! \langle z| ) \|_{\ell_1} \geq 
 \sqrt{\frac{ \|\Xi (|z \rangle \! \langle z| ) \|_{\ell_2}^6}{ \|\Xi (|z \rangle \! \langle z| ) \|_{\ell_4}^4}}=\frac{d^{\frac{3}{2}}}{ \|\Xi (|z \rangle \! \langle z| ) \|_{\ell_4}^2   },
\end{equation*}
because the characteristic function is proportional to an isometry. This in turn implies
\begin{equation}\label{eq:converse3}
 \frac{\|\Xi (|z \rangle \! \langle z| ) \|_{\ell_1}-1}{(d+1)(d-1)}\geq  \frac{ \frac{d^{\frac{3}{2}}}{\alpha (z)}-1}{(d+1)(d-1)}.
 \end{equation}

\subsection{Characteristic function of different fiducials and their implications} \label{sec:fiducials}

The characteristic functions of stabilizer states are well-known \cite{Gros06}. Nonetheless, we shall  derive it here for the sake completeness.
In dimension $d=2^n$, every stabilizer state $|z \rangle \in \mathbb{C}^d$ is a common eigenvector of an order-$d$ Abelian subgroup of the Pauli group $P(d) = \left\{ \pm W_k, \pm i W_k \right\}_{k=1}^d$ that does not contain $-\mathbb{I}$. This in turn implies that (see e.g.\ \cite[Exercise 10.34]{nielsen_quantum_2010})
\begin{equation*}
|z \rangle \! \langle z| = \frac{1}{d} \sum_{k \in S} \phi_k W_k \quad \phi_k \in \left\{ \pm 1 \right\}.
\end{equation*} 
Here $S \subset \left\{1,\ldots,d^2 \right\}$ is a subset of cardinality $|S|=d$. Mutual orthogonality of the Pauli matrices with respect to the Hilbert-Schmidt inner product then implies
\begin{align*}
\Xi (|z \rangle \! \langle z| )
=& \sum_{j=1}^{d^2} \tr \left( W_j \frac{1}{d} \sum_{k \in S} \phi_k W_k \right) |e_j \rangle \\
=& \frac{1}{d} \sum_{j=1}^{d^2} \sum_{k \in S} \phi_k \mathrm{tr} \left( W_k W_j \right) |e_j \rangle \\
=& \sum_{k \in S} \phi_k |e_k \rangle.
\end{align*}
Accordingly,
\begin{equation}
\| \Xi (|z \rangle \! \langle z|) \|_{\ell_p}^p =  \sum_{k \in S} \left| \phi_k \right|^p = d \label{eq:stab_characteristic_function}
\end{equation}
for any $1 \leq p < \infty$. 

Since $\alpha (z) = \frac{1}{d^2} \| \Xi (|z \rangle \! \langle z|) \|_{\ell_4}^4$ this in particular implies $d \alpha_1 (z) = 1$ and consequently
\begin{equation*}
\kappa (X,z ) \leq \frac{d+1}{d+4} \left( 6\frac{ \| X \|_1^2}{\| X \|_2^2} + 10 \right) \leq 6d, \quad \forall X \in \mathcal{H}_d
\end{equation*}
where $\kappa (X,z)$ was defined in \eqref{eq:kappa}. This in turn implies
\begin{equation*}
\|\mathcal{M}_{\mathrm{stab}}( X) \|_{\ell_1} \geq \frac{ \| X \|_1}{\sqrt{ \kappa (X,z) \mathrm{rank}(X)}} \geq \frac{ \| X \|_1}{\sqrt{6} d} \quad \forall X \in \mathcal{H}_d
\end{equation*}
which confirms the lower bound in \eqref{eq:stab_bound}.

For rank-two matrices $X \in \mathcal{H}_d$ and stabilizer state fiducials ($| \alpha (z) - \beta (z) |= \frac{1}{d+4}$)  the bound on $\kappa (X,z)$ in \eqref{eq:POVMnormConstants} may be further simplified to
\begin{align*}
\kappa (X,z) \leq & \frac{6 \| X \|_1^2 \| X \|_2^4 + 24 \| X \|_1^2 \mathrm{tr} \left( P_{\mathrm{Sym}^4} X^{\otimes 4} \right)}{\left( \| X \|_2^2 + \mathrm{tr}(X)^2 \right)^3}
\leq 36,
\end{align*}
where the last inequality is due to \autoref{lem:4mIneq4} in the appendix. This in turn implies
\begin{equation*}
\|\mathcal{M}_{\mathrm{stab}}( X) \|_{\ell_1} \geq \frac{1}{6} \| X \|_1 \quad \forall X \in \mathcal{H}_d:\; \mathrm{rank}(X) = 2.
\end{equation*}
\autoref{prop:pure_optimal} is an immediate consequence from this. This statement is in fact valid for arbitrary Clifford orbits, because stabilizer states lead to a worst case behavior of $\kappa (X,z)$.

Finally, Eq.~\eqref{eq:stab_characteristic_function} also implies that the constant in \autoref{thm:converse} amounts to
\begin{equation*}
\frac{ \| \Xi (|z \rangle \! \langle z|) \|_{\ell_1}-1}{(d+1)(d-1)} = \frac{1}{d+1}
\end{equation*}
which confirms that the lower bound presented in \eqref{eq:converse2} is indeed saturated for stabilizer states.

Let us now turn our attention to the characteristic function of the ``magic product state'' $|z \rangle \! \langle z|=\rho^{\otimes n} \in\mathcal{H}_{2^n}$ with $\rho = \frac{1}{2} \left(\sigma_0 + \frac{1}{\sqrt{3}} \left( \sigma_1+\sigma_2+\sigma_3 \right) \right) \in \mathcal{H}_2$. Here $\sigma_0,\ldots,\sigma_3 \in \mathcal{H}_2$ denote the single qubit Pauli matrices with the convention $\sigma_0 = \mathbb{I}$. 
We will content ourselves with directly computing $\ell_p$ norms of the characteristic function. 
To this end, we use the fact that every $d=2^n$-dimensional Pauli matrix admits a tensor product decomposition
\begin{equation*}
W_k = \sigma_{k_1} \otimes \cdots \otimes \sigma_{k_n} \quad k_j \in \left\{0,1,2,3 \right\}
\end{equation*}
into single qubit Pauli's. Doing so implies
\begin{align*}
\| \Xi ( \rho^{\otimes n} ) \|_{\ell_p}^p
=& \sum_{k_1,\ldots,k_n=0}^3 \left| \tr \left( W_{k_1} \otimes \cdots \otimes W_{k_n} \rho^{\otimes n}  \right) \right|^p \\
=& \sum_{k_1,\ldots,k_n=0}^3 \left| \tr \left( W_{k_1} \rho \right) \cdots \tr \left(W_{k_n} \rho \right) \right|^p \\
=& \prod_{j=1}^n \sum_{k_j=0}^3 \left|\tr \left( W_{k_j} \rho \right)\right|^p \\
=& \prod_{j=1}^n \left( 1 + 3 \left( \frac{1}{\sqrt{3}} \right)^p  \right) \\
=& \left( 1 + 3 \left( \frac{1}{\sqrt{3}} \right)^p  \right)^n.
\end{align*}
For $\alpha (z)$ defined in \eqref{eq:alpha}, we thus obtain
\begin{align}
\alpha (z) =& \frac{1}{d^2}\| \Xi (|z \rangle\! \langle z| ) \|_{\ell_4}^4
= \frac{1}{d^2}\left( 1 + \frac{3}{9} \right)^n = \frac{4^n}{2^{2n} 3^n} \nonumber \\
=& \frac{1}{3^n} = \left( \frac{1}{9}\right)^{\frac{n}{2}}
< \left(\frac{1}{8} \right)^{\frac{n}{2}} = d^{-\frac{3}{2}}. \label{eq:magic_4norm}
\end{align}
Inserting this into \autoref{thm:main} leads to relations~\eqref{eq:magic_good} and \eqref{eq:magic}. 
Similarly:
\begin{equation}
\| \Xi (|z \rangle \! \langle z| ) \|_{\ell_1}
= \left( 1 + \sqrt{3} \right)^n < d^{1.45}, \label{eq:magic_1norm}
\end{equation}
and inserting this into \autoref{thm:converse} implies the converse bound \eqref{eq:magic_converse} for Clifford orbits with a magic state fiducial.

\section{Entropic uncertainty and certainty relations} \label{sec:entropic_proofs}

Let us start this section with re-capitulating a proof of the strong average uncertainty relations for both mutually unbiased bases \eqref{eq:mub_uncertainty} and stabilizer bases \eqref{eq:stab_uncertainty}.
As pointed out in \cite{wehner_entropic_2010}, both statements follow from the fact that complete sets of mutually unbiased bases and stabilizer bases, respectively, form complex projective 2-designs in prime power dimensions.

It is instructive to repeat their argument.
First note that every quantum state $\rho \in \mathcal{S}_d$ amounts to a convex combination of pure states. Since entropy is concave, Jensen's inequality allows for restricting our attention to pure states $\phi = |\phi \rangle \! \langle \phi | \in \mathcal{S}_d$.

Next, we point out that the Shannon entropy is a special case of a more general family of entropy functions: \emph{R\'enyi entropies}. Let $p \in \mathbb{R}^N$ be a discrete probability distribution represented by a probability vector. Then for every $\alpha \geq 0$ 
the R\'enyi entropy of this distribution is defined as
\begin{equation}
H_\alpha \left( p \right) = \frac{1}{1-\alpha} \log_2 \left( \left\| p \right\|_{\ell_\alpha}^\alpha \right). \label{eq:renyi}
\end{equation}
These R\'enyi entropies are monotonically decreasing in $\alpha$, i.e. $H_\alpha (p ) \geq H_\beta (p)$ for any $p$ provided that $\alpha \leq \beta$.
The Shannon entropy \eqref{eq:shannon} arises from taking the limit $\alpha \to 1$ in \eqref{eq:renyi}. 
The R\'enyi entropy of order $\alpha =2$---also known as \emph{collision entropy}---provides a lower bound on the Shannon entropy: $H (p) \geq H_2 (p)$.

Now, let $\mathcal{B}_1,\ldots,\mathcal{B}_{M}$ denote a family of orthonormal bases whose union forms a complex projective 2-design of cardinality $N=dM$: 
\begin{equation*}
\mathcal{M}_{\mathrm{2D}} = \left\{ |b_1^{(k)}\rangle,\ldots,|b_d^{(k)} \rangle \right\}_{k=1}^M = \left\{ x_k \right\}_{k=1}^N.
\end{equation*}
A complete set of mutually unbiased bases, as well as stabilizer states, form particular instances of such configurations. 
Then, lower bounding the Shannon entropy by the collision entropy and exploiting concavity of the logarithm results in
\begin{align}
\frac{1}{M} \sum_{k=1}^M H \left( \mathcal{B}_k | \phi \right)
\geq & \frac{1}{M} \sum_{k=1}^M H_2 \left( \mathcal{B}_k (\phi) \right) \nonumber \\
=& \frac{1}{M} \sum_{k=1}^M - \log_2 \left( \sum_{j=1}^d  \langle b_j^{(k)} | \phi | b_j^{(k)} \rangle^2 \right) \nonumber \\
\geq & - \log_2 \left( \frac{d}{N} \sum_{k=1}^N \langle x_k | \phi |x_k \rangle^2 \right) \nonumber \\
=& - \log_2 \left( d \mathbb{E} \left[ S_\phi^2 \right] \right). \label{eq:uncertainty_aux1}
\end{align}
Here, we have rewritten $\frac{1}{N}\sum_{k=1}^N \langle x_k | \phi | x_k \rangle^2$ as the second moment of the random variable $S_\phi$ defined in \eqref{eq:S}.
Since, $\mathcal{M}_{\mathrm{2D}}$ forms a complex projective 2-design, formula~\eqref{eq:2moment_bound} is valid and implies
\begin{equation*}
\mathbb{E} \left[ S_\phi^2 \right] = \frac{ \mathrm{tr}(\phi)^2 + \mathrm{tr}(\phi^2)}{(d+1)d} = \frac{2}{(d+1)d},
\end{equation*}
because $\phi \in \mathcal{S}_d$ is pure. Inserting this into \eqref{eq:uncertainty_aux1} allows us to conclude
\begin{equation*}
\frac{1}{M} \sum_{k=1}^M H \left( \mathcal{B}_k | \phi \right)
\geq - \log_2 \left( \frac{2}{d+1} \right) = \log (d+1)-1,
\end{equation*}
as claimed.

Such a proof strategy in principle also allows for taking into account design properties of higher order. Indeed, suppose that the union of $\mathcal{B}_1,\ldots,\mathcal{B}_M$ forms a complex projective $t$-design with $t \geq 3$.
Then lower bounding the Shannon entropy with the R\'enyi entropy of order $t$ instead of the collision entropy results in
\begin{equation*}
\frac{1}{M} \sum_{k=1}^M H \left( \mathcal{B}_k | \phi \right) \geq \frac{1}{1-t} \log_2 \left( \frac{t! d!}{(t+d-1)!} \right).
\end{equation*}
As pointed out in \cite{wehner_entropic_2010}, this bound becomes weaker as $t$ increases. 

On first sight, this prevents us from exploiting the additional information about third and fourth moments of stabilizer states obtained in this work.
For stabilizer states, we do know the first four moments of the random variable $S_\phi$ exactly, because $\phi$ is pure. 
In order to exploit this additional information, we formalize a linear programming approach which is inspired by \cite{serafini_canonical_2007}.
The key idea is to replace the discrete random variable $S_\phi \in [0,1]$ by its density function
\begin{equation*}
\mu_{\mathrm{stab},\phi} (x) = \frac{1}{N} \sum_{k=1}^N \delta \left( x - \langle x_k | \phi |x_k \rangle \right)
\end{equation*}
on the unit interval $[0,1]$. Doing so allows us to formulate arbitrary moments of order $\alpha \geq 0$ as integrals
\begin{equation*}
\mathbb{E} \left[ S_\phi^\alpha \right] = \int_0^1 \mu_{\mathrm{stab},\phi}(x) x^\alpha \mathrm{d}x 
\end{equation*}
which are linear in $\mu_{\mathrm{stab},\phi}$. The fact that stabilizer states form a complex projective 3-design completely specify the moments of $\mu_{\mathrm{stab},\phi}$ for $\alpha=1,2,3$. 
Moreover, \autoref{thm:main_technical} puts an upper bound on the fourth moment which we derive in the appendix.

For any $1 < \alpha <2$ we may thus obtain an upper bound on $\mathbb{E} \left[ S_\phi^\alpha \right]$ by solving the following maximization over probability densities $\mu: [0,1] \to [0,1]$:
\begin{align}
\underset{ \mu:[0,1]\to \mathbb{R}}{\textrm{maximize}} & \quad \int_0^1 \mu (x) x^\alpha \mathrm{d} x \label{eq:measure_maximization}\\
\textrm{subject to} & \quad \int_0^1 \mu(x) x \mathrm{d} x = \frac{1}{d}, \nonumber \\
& \quad \int_0^1 \mu(x) x^2 \mathrm{d} x = \frac{2}{(d+1)d}, \nonumber \\
& \quad \int_0^1 \mu(x) x^3 \mathrm{d} x = \frac{6}{(d+2)(d+1)d}, \nonumber \\
&\quad \int_0^2 \mu (x) x^4 \mathrm{d} x \leq \frac{30}{(d+4)(d+2)(d+1)d} \nonumber \\
&\quad  \int_0^1 \mu (x) \mathrm{d}x = 1,\quad \mu (x) \geq 0 \quad \forall x \in [0,1]. \nonumber
\end{align}
The first four constraints demand that $\mu$ reproduces the moments of $\mu_{\mathrm{stab},\phi}$ for $\alpha=1,2,3,4$. On the contrary to the previous approach, these constraints do take into account additional information about the higher moments of stabilizer states.
The final constraints simply enforce $\mu$ to be a valid probability density on the unit interval.

In order to render this optimization computationally tractable, we coarse-grain the unit interval $[0,1]$ to a ``grid'' of $D$ elements: $\vec{g}= \left( 0, \frac{1}{D-1},\frac{2}{D-1},\ldots,\frac{D-2}{D-1},1 \right)^T \in \mathbb{R}^D$.
We also introduce ``powers'' of this grid vector in order to represent the constraints in \eqref{eq:measure_maximization}:
\begin{equation*}
\vec{g}_\beta := \left( 0, \left( \frac{1}{D-1} \right)^\beta, \left( \frac{2}{D-1} \right)^\beta, \ldots, \left( \frac{D-2}{D-1} \right)^\beta, 1 \right).
\end{equation*}
with $\beta = \alpha,2,3,4$ and $\vec{g}_1 = \vec{g}$.
Likewise, we represent each $\mu$ by a $D$-dimensional probability vector $\vec{\mu} \in \mathbb{R}^D$ on this grid.
Such a discretization approximates \eqref{eq:measure_maximization} by
\begin{align}
\lambda_\alpha := \underset{\vec{\mu} \in \mathbb{R}^D}{\textrm{maximize}}
& \quad \langle \vec{\mu}, \vec{g}_\alpha \rangle \label{eq:linear_program} \\
\textrm{subject to} & \quad \langle \vec{\mu}, \vec{g} \rangle = \frac{1}{d}, \nonumber \\
& \quad \langle \vec{\mu}, \vec{g}_2 \rangle = \frac{2}{(d+1)d}, \nonumber \\
&\quad \langle \vec{\mu}, \vec{g}_3 \rangle = \frac{6}{(d+2)(d+1)d}, \nonumber \\
& \quad \langle \vec{\mu}, \vec{g}_4 \rangle \leq \frac{30}{(d+4)(d+2)(d+1)d}, \nonumber \\
& \quad \langle \vec{\mu}, \vec{1} \rangle = 1, \quad \vec{\mu} \geq \vec{0} \nonumber
\end{align}
which is a linear program. Here, $\vec{1} =(1,\ldots,1)^T \in \mathbb{R}^D$ denotes the ``all-ones'' vector and $\vec{\mu} \geq \vec{0}$ indicates non-negativity of $\vec{\mu}$ in the sense that all its vector components are non-negative.
In particular, we can conclude
\begin{equation*}
\mathbb{E} \left[ S_\phi^\alpha \right] \leq \lambda_\alpha \quad \forall \phi =| \phi \rangle \! \langle \phi| \in \mathcal{S}_d.
\end{equation*}

In order to obtain an entropic uncertainty relation, we fix $\alpha=1+\epsilon$ close to one (e.g. $\epsilon=0.1$), solve the linear program \eqref{eq:linear_program} for this value of $\alpha$ and then use monotonicity of R\'enyi entropies, as well as concavity of the logarithm, in a fashion similar to before:
\begin{align*}
\frac{1}{M} \sum_{k=1}^M H \left( \mathcal{B}_k | \phi \right)
\geq & \frac{1}{M} \sum_{k=1}^M H_{\alpha} \left( \mathcal{B}_k | \phi \right)  \\
\geq & \frac{1}{1-\alpha} \log_2 \left( \frac{d}{N} \sum_{k=1}^N \langle x_k | \phi |x_k \rangle^\alpha \right) \\
=& \frac{1}{1-\alpha} \log_2 \left( d \mathbb{E} \left[ S_\phi^\alpha \right] \right) \\
\geq & \frac{1}{1-\alpha} \log_2 \left( d \lambda_\alpha \right).
\end{align*}

Analogous approaches work for 2-,3- and 4-designs, provided that one adjusts the constraints in the linear program \eqref{eq:linear_program} appropriately. 
\autoref{fig:uncertainty} compares the different results obtained in such a way graphically over a wide range of dimensions $d=2^n$, where our results about stabilizer states apply.

For the present paper, we do content ourselves with these numerically obtained stronger uncertainty relations for stabilizer states. However, linear programs are a very versatile theoretical tool and it is highly plausible that a more detailed analysis will allow for supporting our numerical findings with analytical proofs. We leave this to future work. 

Let us now turn our attention to converse bounds, so-called \emph{certainty relations}. We have presented two such statements in Sec.~\ref{sub:entropic_main}. Both results closely resemble certainty relations in \cite[Section 5]{matthews_distinguishability_2009} that were formalized for complex projective 2- and 4-designs, respectively.

For the first result, we consider an isotropic ensemble $\sum_{x} p_x \phi_x = \frac{1}{d} \mathbb{I} \in \mathcal{S}_d$ of pure states and a Clifford POVM measurement $\mathcal{M}_{C,z}: \mathcal{H}_d \to \mathbb{R}^N$.  
Then, the Shannon mutual information between $X$, the preparation variable, and the measurement outcome of $\mathcal{M}_{C,z}$ may be bounded by
\begin{align*}
I  \left( X: \mathcal{M}_{C,z} \right)
=& \sum_x p_x D \left( \mathcal{M}_{C,z} (\phi_x) \| \mathcal{M}_{C,z} \left( \frac{1}{d} \mathbb{I} \right) \right) \\
\geq & \sum_x \frac{p_x}{2 \log (2)} \left\| \mathcal{M}_{C,z} (\phi_x) - \mathcal{M}_{C,z} \left( \frac{1}{d} \mathbb{I} \right) \right\|_{\ell_1}^2 \\
=& \sum_x \frac{p_x}{2 \log (2)} \left\| \mathcal{M}_{C,z} \left( \phi_x - \frac{1}{d} \mathbb{I} \right)\right\|_{\ell_1}.
\end{align*}
This bound follows from Pinsker's inequality $D(\rho \| \sigma) \geq \frac{1}{2 \log (2)} \left\| \rho - \sigma \right\|_1^2$. 
Now, we can use the fact that $X:= \phi_x - \frac{1}{d} \mathbb{I}$ has ``effective rank'' four. This is a consequence of the following lemma.

\begin{lemma} \label{lem:effective_rank}
Let $\rho \in \mathcal{S}_d$ be quantum state with $\mathrm{rank}(\rho)=r$. Then the ``effective rank'' of $X = \rho - \frac{1}{d} \mathbb{I}$ amounts to
\begin{align*}
r_{\mathrm{eff}}(X) :=& \frac{ \|X\|_1^2}{\|X\|_2^2} \leq \frac{4 \mathrm{rank}(\rho)(d-\mathrm{rank}(\rho))}{d} \\
\leq& 4 \min \left\{ r, d- r \right\}.
\end{align*}
The first bound is saturated by quantum states $\rho$ that are maximally mixed on an $r$-dimensional subspace, while the second bound is saturated, if $\rho$ is pure.
\end{lemma}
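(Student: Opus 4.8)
The plan is to reduce everything to the spectrum of $\rho$ and then to an elementary one-variable optimisation. I would first diagonalise $\rho=\sum_{i=1}^{d}\lambda_i|i\rangle\!\langle i|$ with $\lambda_1\ge\cdots\ge\lambda_r>0=\lambda_{r+1}=\cdots=\lambda_d$ and $\sum_i\lambda_i=1$, so that $X=\rho-\tfrac1d\mathbb{I}$ has eigenvalues $\mu_i:=\lambda_i-\tfrac1d$ for $i\le r$ together with $-\tfrac1d$ of multiplicity $d-r$, and $\sum_{i=1}^{r}\mu_i=1-\tfrac rd=:w$. A direct computation then gives
\begin{equation*}
\|X\|_2^2=\sum_{i=1}^{r}\mu_i^2+\frac{d-r}{d^2}=\mathrm{tr}(\rho^2)-\frac1d,\qquad
\|X\|_1=\sum_{i=1}^{r}|\mu_i|+\frac{d-r}{d}.
\end{equation*}

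Next I would set $A:=\sum_{i=1}^{r}|\mu_i|$ and note that $A\ge\bigl|\sum_i\mu_i\bigr|=w$ while $\tfrac{d-r}{d^2}=\tfrac wd$. By the Cauchy-Schwarz inequality $\sum_i\mu_i^2\ge A^2/r$, hence
\begin{equation*}
r_{\mathrm{eff}}(X)=\frac{(A+w)^2}{\sum_i\mu_i^2+w/d}\le\frac{(A+w)^2}{A^2/r+w/d}=:h(A).
\end{equation*}
The remaining task is to maximise $h$ over the admissible range $A\ge w$. A short computation shows that $h'(A)$ has the sign of $r-Ad$, so $h$ is non-increasing for $A\ge r/d$; in the regime $r=\mathrm{rank}(\rho)\le d/2$ --- which is all the applications ever use --- one has $r/d\le w$, so the maximum is attained at $A=w$, where
\begin{equation*}
h(w)=\frac{4w^2}{w^2/r+w/d}=\frac{4w}{w/r+1/d}=4wr=\frac{4r(d-r)}{d},
\end{equation*}
using $wd+r=d$.

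For the saturation statement, equality throughout the chain forces $A=w$, i.e.\ all $\mu_i\ge0$, together with equality in Cauchy-Schwarz, i.e.\ all $\mu_i$ equal; thus $\mu_i=w/r$ for every $i$, which is exactly the state maximally mixed on an $r$-dimensional subspace. The second bound is then immediate from $\tfrac{d-r}{d}\le1$ and $\tfrac rd\le1$, which give $\tfrac{4r(d-r)}{d}\le4\min\{r,d-r\}$, and specialising to a pure state ($r=1$) yields $r_{\mathrm{eff}}(X)=\tfrac{4(d-1)}{d}$, which approaches $4$ as $d\to\infty$. The genuinely delicate step is the maximisation: one must verify that among all spectra of the prescribed shape the maximally mixed fiducial is extremal, and check that the constraint $\mu_i\ge-1/d$ plays no role --- it does not, since at the claimed optimum $\mu_i=w/r>0$. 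I expect this optimisation to be the main obstacle; everything else is bookkeeping with the eigenvalues of $X$.
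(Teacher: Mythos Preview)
Your optimisation argument is correct and, for $r\le d/2$, establishes the bound cleanly. The paper takes a related but distinct route: rather than maximising the ratio directly, it bounds $\|X\|_1$ above by $2\sqrt{r(d-r)/d}\,\|X\|_2$, combining Cauchy--Schwarz on the first $r$ eigenvalue deviations with the purity lower bound $\|X\|_2^2\ge(d-r)/(rd)$. Your approach has the advantage of making the extremal configuration transparent and of isolating exactly where the hypothesis $r\le d/2$ enters.

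That hypothesis is not a defect of your method but a genuine feature of the problem: the lemma as stated is \emph{false} for $r>d/2$. Take $d=4$ and $\rho=\diag(1/2,\,2/5,\,1/10,\,0)$, so $r=3$; then $X=\diag(1/4,\,3/20,\,-3/20,\,-1/4)$ has $\|X\|_1=4/5$ and $\|X\|_2^2=17/100$, whence $r_{\mathrm{eff}}(X)=64/17>3.76$, exceeding the claimed bound $4\cdot3\cdot1/4=3$. In your language, for $r>d/2$ the critical point $A=r/d$ lies in the interior of the admissible range and $h(r/d)=d$; rank-$r$ states can approach this value by letting some $\lambda_i\to0^{+}$. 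The paper's own derivation stumbles at precisely this point: the step passing from $\sqrt{r\bigl(\|X\|_2^2-(d-r)/d^2\bigr)}$ to $\sqrt{r(1-r/d)}\,\|X\|_2$ would require $\|X\|_2^2\le(d-r)/(rd)$, which is the \emph{reverse} of the purity inequality invoked a line earlier. Since every application in the paper uses $r=1$, your restricted statement is what is actually needed; you might simply add the hypothesis $r\le d/2$ to the lemma.
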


We provide a proof of this statement in the appendix.
As explained in Sec.~\ref{sub:main_distinguishability}, \autoref{thm:main} remains valid if we replace the actual rank of a matrix by its ``effective rank''. 
Among all possible Clifford orbit POVMs, \autoref{thm:main} is weakest for stabilizer states. The ``effective rank'' reformulation of the corresponding bound \eqref{eq:main_stab} reads
$\| \mathcal{M}_{\mathrm{stab}}(X) \|_{\ell_1}\geq \frac{\| X \|_2^2}{4 \|X \|_1^2} \| X \|_1$ and we obtain
\begin{align*}
I  \left( X: \mathcal{M}_{C,z} \right)
\geq & \sum_x \frac{p_x}{2 \log (2)} \left\| \mathcal{M}_{C,z} \left( \phi_x - \frac{1}{d} \mathbb{I}\right) \right\|_{\ell_1}^2\\
\geq & \sum_x \frac{p_x}{512 \log (2)} \left\| \phi_x - \frac{1}{d} \mathbb{I} \right\|_1^2 \\
=& \frac{1}{128 \log (2)} \left( \frac{d-1}{d} \right)^2,
\end{align*}
because $\left\| \phi_x - \frac{1}{d} \mathbb{I} \right\|_1 = 2\frac{d-1}{d}$ and $\sum_x p_x = 1$. 
This bound holds for arbitrary Clifford POVMs including stabilizer states. 
However, the constant $\frac{1}{128}$ may be further improved for particular Clifford orbits, such as 4-designs. 

In order to derive the second certainty relation \eqref{eq:certainty2}, we once more follow a similar calculation presented in \cite{matthews_distinguishability_2009}. 
Let $\mathcal{M}_{C,z}: \mathcal{H}_d \to \mathbb{R}^N$ denote a Clifford POVM measurement that maps states to probability vectors.
Then the definition of the relative entropy together with Pinsker's inequality imply for any pure state $\phi \in \mathcal{S}_d$
\begin{align*}
\log (N) - S \left( \mathcal{M}_{C,z} | \phi \right) 
=& D \left( \mathcal{M}(\phi) \| \mathcal{M}_{C,z} \left( \frac{1}{d} \mathbb{I} \right) \right) \\
\geq & \frac{1}{2 \log (2)} \left\| \mathcal{M}_{C,z} \left( \phi - \frac{1}{d} \mathbb{I} \right) \right\|_{\ell_1}^2 \\
\geq & \frac{1}{512 \log (2)} \left\| \phi - \frac{1}{d} \mathbb{I} \right\|_1^2 \\
=& \frac{1}{128 \log (2)} \left( \frac{d-1}{d} \right)^2.
\end{align*}

\section*{Acknowledgments}

R.K. wants to thank Matthias Christandl and Fr\'ed\'eric Dupuis for introducing him to the concept of POVM norm constants.
This work has been supported by the Excellence Initiative of the German Federal and State Governments (Grant ZUK 81), the ARO under contract W911NF-14-1-0098 (Quantum Characterization, Verification, and Validation), and the DFG (SPP1798 CoSIP). 
Major parts of this project were undertaken while DG and RK participated in the \emph{Mathematics of Signal Processing} program of the Hausdorff Research Institute of Mathematics at the University of Bonn.
We thank Martin Kliesch for relevant last-minute inputs.

\bibliographystyle{ieeetr}
\bibliography{cliffinv,all_references}

\begin{widetext}

\section*{Appendix}

\subsection{Auxiliary statements for deriving \autoref{thm:4design} and \autoref{thm:main}}

With the notable exception of \cite{lancien_distinguishing_2013}, previous derivations \cite{ambainis_quantum_2007,matthews_distinguishability_2009} of the fourth moment bound presented in \eqref{eq:4design_bound} have assumed $X$ to be traceless. 
This additional assumption considerably simplifies the task at hand. Here, we prove a similar bound valid for arbitrary $X \in \mathcal{H}_d$ at the cost of a slightly larger multiplicative constant.
At the  heart of this derivation is \cite[Lemma 17]{kueng_low_2015} which provides a closed-form expression for the object at hand:

\begin{lemma}\label{lem:4mIneq}
	Suppose $X$ is a nonzero Hermitian operator and $y=|\tr(X)|/\|X\|_2$. Then 
\begin{align}\label{eq:4mIneq}
&\frac{24 \tr \left( P_{\mathrm{Sym}^4} X^{\otimes 4} \right)}{\left( \tr (X^2) + \tr (X)^2 \right)^2}
\leq 3+\frac{6+8y-2y^4}{(1+y^2)^2}\leq \frac{3}{5}(7+4\cdot 2^{1/3}+3\cdot 2^{2/3})\approx 10.08113.
\end{align}
Here the second inequality 
is saturated  iff $y= 2^{1/3}-1$;  the first one cannot be saturated except when $y=1$ and $X$ has rank 1, but it can be approached with arbitrarily small gap.
\end{lemma}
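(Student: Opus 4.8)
The plan is to reduce everything to power sums of the eigenvalues of $X$ and then to a one-variable calculus problem. First I would invoke the closed form for $\tr(P_{\mathrm{Sym}^4}X^{\otimes 4})$ furnished by \cite[Lemma 17]{kueng_low_2015} (equivalently, obtained from $P_{\mathrm{Sym}^4}=\tfrac1{4!}\sum_{\pi\in S_4}U_\pi$ together with the cycle-type decomposition of $S_4$):
\begin{equation*}
24\,\tr(P_{\mathrm{Sym}^4}X^{\otimes 4})=(\tr X)^4+6(\tr X)^2\tr(X^2)+3\tr(X^2)^2+8(\tr X)\tr(X^3)+6\tr(X^4).
\end{equation*}
Since both sides of the claimed bound are invariant under $X\mapsto -X$ and homogeneous of degree zero under $X\mapsto cX$, I may normalise to $\|X\|_2^2=\sum_i\lambda_i^2=1$ and $\tr X=\sum_i\lambda_i=y\ge 0$, where $\lambda_1,\dots,\lambda_d$ are the eigenvalues of $X$.

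For the first (main) inequality, after substituting the closed form and clearing the strictly positive denominator $(1+y^2)^2$, a short manipulation shows that $\frac{24\tr(P_{\mathrm{Sym}^4}X^{\otimes4})}{(\tr X^2+(\tr X)^2)^2}\le 3+\frac{6+8y-2y^4}{(1+y^2)^2}$ is equivalent to the simple estimate $8y\,\tr(X^3)+6\,\tr(X^4)\le 8y+6$. This I would close with two elementary spectral bounds that hold under $\sum_i\lambda_i^2=1$: then $\max_i|\lambda_i|\le 1$, whence $\tr(X^3)=\sum_i\lambda_i^3\le\sum_i|\lambda_i|^3\le\max_i|\lambda_i|\le 1$ and $\tr(X^4)=\sum_i\lambda_i^4\le(\sum_i\lambda_i^2)^2=1$; combined with $y\ge 0$ this gives the bound. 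Tracking equality: $\tr(X^4)=1$ forces $X$ to be rank one with eigenvalue $+1$, i.e.\ $y=1$, and then $\tr(X^3)=1$ automatically, while for any other $X$ the inequality is strict but can be made arbitrarily tight by perturbing a rank-one projector — exactly the saturation/approachability claim in the statement.

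For the second inequality I would simply maximise $g(y):=3+\frac{6+8y-2y^4}{(1+y^2)^2}$ over $y\ge 0$. Writing $g'$ over the common denominator and cancelling one factor $1+y^2$, the stationarity condition reduces to the cubic $y^3+3y^2+3y-1=0$, i.e.\ $(1+y)^3=2$, whose unique real root is $y^\star=2^{1/3}-1>0$. Since $g(0)=9$, $g(y)\to 1$ as $y\to\infty$, and $g'(0)=8>0$, this single critical point is the global maximum on $[0,\infty)$. Evaluating $g(y^\star)$ with $u:=2^{1/3}$ (so $1+y^\star=u$, $u^3=2$) and reducing all powers of $u$ modulo $u^3=2$ yields $g(y^\star)=\tfrac35\bigl(7+4\cdot 2^{1/3}+3\cdot 2^{2/3}\bigr)\approx 10.08113$, with equality in the second inequality precisely at $y=y^\star$.

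The only genuinely delicate part is this last reduction: checking that the critical-point cubic really collapses to $(1+y)^3=2$, that it has no other nonnegative root, and that the unwieldy number $g(2^{1/3}-1)$ telescopes to the advertised symmetric expression in $2^{1/3}$ and $2^{2/3}$. Everything preceding it — the $S_4$ expansion, the normalisation, and the two spectral bounds $\tr(X^3)\le 1$, $\tr(X^4)\le 1$ — is routine.
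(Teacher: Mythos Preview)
Your argument for the two inequalities is correct and essentially identical to the paper's: you use the same $S_4$ cycle-type expansion, the same Schatten/eigenvalue bounds $|\tr(X^3)|\le\|X\|_2^3$ and $\tr(X^4)\le\|X\|_2^4$ (just phrased in eigenvalue language after normalisation), and the same single-variable calculus, with the nice observation that the cubic is $(1+y)^3=2$.

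The one genuine gap is the approachability claim. ``Perturbing a rank-one projector'' keeps $y$ near $1$, where the bound equals $f(1)=6$; it does \emph{not} show that the first inequality can be approached for other values of $y$, and in particular not near the maximiser $y=2^{1/3}-1$ where $f(y)\approx 10.08$. Without that, you have not shown that the overall constant is sharp. The paper handles this by an explicit high-rank family $X=\diag(ak,-1,\dots,-1)$ with $k\to\infty$ and $a$ tuned so that $|\tr X|/\|X\|_2=y$; along this family $\|X\|_3/\|X\|_2\to 1$ and $\|X\|_4/\|X\|_2\to 1$, so the ratio converges to $3+\frac{6+8y-2y^4}{(1+y^2)^2}$ for any fixed $y\neq 1$. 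You need some construction of this type.
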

When $X$ is traceless, \autoref{lem:4mIneq} implies that
 \begin{align}
 &\frac{24 \tr \left( P_{\mathrm{Sym}^4} X^{\otimes 4} \right)}{\left( \tr (X^2) + \tr (X)^2 \right)^2}<9, 
 \end{align}
where the upper bound can be approached with arbitrarily small gap.
 
\begin{proof}
According to \cite[Lemma 17]{kueng_low_2015},	
\begin{align}
24 \tr \left( P_{\mathrm{Sym}^4} X^{\otimes 4} \right)
=&  \left( \tr (X)^4 + 8 \tr (X) \tr (X^3) + 3 \tr (X^2)^2 + 6 \tr (X)^2 \tr (X^2) +6 \tr (X^4) \right) \nonumber \\
=&  3 \left( \tr (X^2) + \tr (X)^2 \right)^2 +  8 \tr (X) \tr (X^3) + 6 \tr (X^4) - 2 \tr (X)^4 \nonumber \\
\leq &  
3 \left( \| X \|_2^2 + \tr (X)^2 \right)^2 +  8 | \tr (X)| \| X \|_3^3 + 6 \| X \|_4^4 - 2 \tr (X)^4\nonumber\\
\leq &  3 \left( \| X \|_2^2 + \tr (X)^2 \right)^2 +  8 | \tr (X)| \| X \|_2^3 + 6 \| X \|_2^4 - 2 \tr (X)^4
\label{aeq:4m1},
\end{align}
where the first inequality is saturated iff $X\geq0$ or $X\leq 0$, and the second one is saturated iff $\| X \|_4=\| X \|_3=\| X \|_2$, that is, $X$ has rank 1. Consequently, 
\begin{align}
\frac{24 \tr \left( P_{\mathrm{Sym}^4} X^{\otimes 4} \right)}{\left( \tr (X^2) + \tr (X)^2 \right)^2}
&\leq   3 + \frac{ 8 | \tr (X)| \| X \|_2^3 + 6 \| X \|_2^4 - 2 \tr (X)^4}{\left( \| X \|_2^2 + \tr (X)^2 \right)^2 }
=f(y):= 3+\frac{6+8y-2y^4}{(1+y^2)^2}\nonumber\\
&\leq \frac{3}{5}(7+4\cdot 2^{1/3}+3\cdot 2^{2/3})\approx 10.08113.
\end{align}
Here the first inequality is saturated iff $X$ has rank 1 (in which case $y=1$). To derive the second inequality, note that 
\begin{equation}
f'(y)=\frac{8(1-3y-3y^2-y^3)}{(1+y^2)^3},
\end{equation}
which is positive when $0\leq y< 2^{1/3}-1$ and negative when $ y> 2^{1/3}-1$. So the maximum of $f(y)$ for $y\geq0$
is attained when  $y= 2^{1/3}-1$, in which case 
\begin{equation}
f(2^{1/3}-1)= \frac{3}{5}(7+4\cdot 2^{1/3}+3\cdot 2^{2/3}).
\end{equation}
 
 Although the first inequality in \ref{eq:4mIneq} can not be saturated except when $y=1$, the bound can be approached arbitrarily close if we do not impose any restriction on the rank of $X$. To show this point, suppose $X=\diag(ak,-1,-1,\ldots,-1)$ has rank $k+1$, where $a$ is a real constant to be determined later.
 Then 
 \begin{align}
\tr(X)=k(a-1),\quad \|X\|_2^2=a^2 k^2+k,\quad \tr(X^3)=a^3k^3-k \quad \|X\|_4^4=a^4k^4+k.
 \end{align}
Assuming  $y\geq 0$, $y\neq1$, $k\geq y^2$, and let 
 \begin{align}
a=\frac{k+\sqrt{ky^2(1+k-y^2)}}{k(1-y^2)}.
\end{align}
Then $\tr (X) \tr (X^3)\geq0$, $|\tr(X)|/\|X\|_2=y$,
\begin{align}
\lim_{k\rightarrow\infty}a=\frac{1}{1-y},\quad \lim_{k\rightarrow\infty} \frac{|\tr(X^3)|}{\|X\|_2^3}=1,\quad \lim_{k\rightarrow\infty} \frac{\|X\|_4}{\|X\|_2}=1,
\end{align}
which implies that
\begin{align}
\lim_{k\rightarrow\infty} \frac{24 \tr \left( P_{\mathrm{Sym}^4} X^{\otimes 4} \right)}{\left( \tr (X^2) + \tr (X)^2 \right)^2}= 3+\frac{6+8y-2y^4}{(1+y^2)^2}.
\end{align}
\end{proof}

\begin{lemma}\label{lem:4mIneq2}
	Suppose $X$ is a nonzero Hermitian operator and $y=|\tr(X)|/\|X\|_2$. Then 
\begin{align}
\frac{24 \tr \left( P_{\mathrm{Sym}^4} X^{\otimes 4} \right)\tr(X^2)}{\left( \tr (X^2) + \tr (X)^2 \right)^2}
&\leq\frac{3(1+y^2)^2+6+8y-2y^4}{(1+y^2)^3}
< 9.673  \label{aeq:4m2}.
\end{align}
	Here the first inequality cannot be saturated except when $y=1$ and $X$ has rank 1, but it can be approached with arbitrarily small gap.
\end{lemma}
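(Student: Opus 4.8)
The plan is to derive \eqref{aeq:4m2} from \autoref{lem:4mIneq} by one algebraic step and then to verify the numerical bound through a short one-variable analysis. Set $m:=\tr(X^2)=\|X\|_2^2>0$; since $\tr(X)^2 = y^2 m$ we have $\tr(X^2)+\tr(X)^2 = (1+y^2)m$, so that
\begin{equation*}
  \frac{24\,\tr\!\left(P_{\mathrm{Sym}^4}X^{\otimes4}\right)\tr(X^2)}{\left(\tr(X^2)+\tr(X)^2\right)^{3}}
  \;=\;\frac{1}{1+y^2}\cdot
  \frac{24\,\tr\!\left(P_{\mathrm{Sym}^4}X^{\otimes4}\right)}{\left(\tr(X^2)+\tr(X)^2\right)^{2}} .
\end{equation*}
I would then invoke \autoref{lem:4mIneq}, which bounds the second factor by $f(y)=3+\frac{6+8y-2y^4}{(1+y^2)^2}=\frac{3(1+y^2)^2+6+8y-2y^4}{(1+y^2)^2}$; dividing by $1+y^2$ produces exactly the first bound in \eqref{aeq:4m2}. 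The saturation clause is inherited verbatim: equality in \autoref{lem:4mIneq} forces $X\ge0$ or $X\le0$ together with $\|X\|_4=\|X\|_2$, i.e.\ $\rank(X)=1$ (hence $y=1$), and the explicit diagonal family built in the proof of \autoref{lem:4mIneq} shows the bound is otherwise only approached.

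It then remains to prove $g(y):=\frac{3(1+y^2)^2+6+8y-2y^4}{(1+y^2)^3}<9.673$ for all $y\ge0$. With $u:=1+y^2\ge1$ the numerator collapses to $(u+2)^2+8\sqrt{u-1}$, so $g=\bigl((u+2)^2+8\sqrt{u-1}\bigr)u^{-3}$. Differentiating $g$ in $y$ and clearing the positive factor $(1+y^2)^2$ shows that $g'(y)=0$ is equivalent to $p(y):=y^5+10y^3+20y^2+21y-4=0$; since $p'(y)=5y^4+30y^2+40y+21>0$ on $[0,\infty)$ and $p(0)=-4<0$, the polynomial $p$ has a unique positive root $y_\star$, and because $g(0)=9$, $g(y)\to0$ as $y\to\infty$, and $g'>0$ on $(0,y_\star)$, this $y_\star$ is the global maximizer of $g$ on $[0,\infty)$. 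One localizes $y_\star\in(0.16,0.17)$ by two sign evaluations of $p$ and checks $g(y_\star)<9.673$; equivalently, one may verify directly that the degree-six polynomial $h(y):=9.673\,(1+y^2)^3-\bigl(3(1+y^2)^2+6+8y-2y^4\bigr)$ is strictly positive on $[0,\infty)$.

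The first paragraph is routine; the only real obstacle I anticipate is the last one, because $g'=0$ has no closed-form solution and, more importantly, the maximum $g(y_\star)$ is extremely close to the claimed constant (the minimum of $h$ is only of order $10^{-3}$). A crude interval bound on $g$ around $y_\star$ is therefore not sharp enough: one has to narrow the bracketing interval for $y_\star$, or bound $h$ from below near its minimizer, with some care before the strict inequality $<9.673$ emerges.
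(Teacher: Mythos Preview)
Your proposal is correct and follows essentially the same route as the paper: both obtain the first inequality in \eqref{aeq:4m2} by dividing the bound of \autoref{lem:4mIneq} by $1+y^2$ (you have tacitly corrected the exponent in the denominator of the stated ratio from $2$ to $3$, which is indeed a typo---the ratio as printed is not scale-invariant), and both prove the numerical bound by locating the unique critical point of $g$ via the quintic $y^5+10y^3+20y^2+21y-4$ and evaluating $g$ there. The paper is simply terser at the final step, quoting $y_0\approx0.163078$ and $g(y_0)\approx9.67249$ rather than bracketing as you do.
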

\begin{proof}
The lemma follows from \autoref{lem:4mIneq} except for the second inequality in \autoref{aeq:4m2}. To derive this inequality, let 
\begin{equation}
f(y)=\frac{3(1+y^2)^2+6+8y-2y^4}{(1+y^2)^3}; 
\end{equation}
then 
\begin{equation}
f'(y)=-\frac{2(-4+21y+20y^2+10y^3+y^5)}{(1+y^2)^4}.
\end{equation}
Note that $(1+y^2)^4f'(y)$ is monotonic decreasing with $y$ when $y\geq0$ and has a unique  real root $y_0>0$. Therefore, the maximum of $f(y)$ is attained when $y=y_0$. Now it is straightforward to verify that $f(y_0)<9.673$.  Calculation shows that 
\begin{equation}
y_0\approx 0.163078,\quad f(y_0)\approx 9.67249.
\end{equation}
\end{proof}

\begin{lemma}\label{lem:4mIneq3}
	Suppose $X$ is a rank-2 Hermitian operator. Then 
	\begin{align}\label{eq:4mIneq3}
	\frac{24\| X \|_1^2\tr \left( P_{\mathrm{Sym}^4} X^{\otimes 4} \right)}{ [\| X \|_2^2 + \tr (X)^2]^3}\leq \frac{5}{81}(95+32\sqrt{10})\approx 12.1107.
	\end{align}
	If $X$ is in addition traceless, then 
		\begin{align}\label{eq:4mIneq32}
		\frac{24\| X \|_1^2\tr \left( P_{\mathrm{Sym}^4} X^{\otimes 4} \right)}{ [\| X \|_2^2 + \tr (X)^2]^3}=12.
		\end{align}
\end{lemma}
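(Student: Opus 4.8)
The plan is to parametrize a rank-2 Hermitian $X$ by its two nonzero eigenvalues $a,b\in\mathbb R$ and reduce the claimed inequality to a single-variable optimization. Write $t = b/a$ (assuming $a\neq 0$; the case $a=0$ is degenerate and gives rank $\le 1$, contradiction, so both eigenvalues are nonzero). Then all the quantities appearing are homogeneous of matching degree, so we may scale $a=1$ and work with $t\in\mathbb R$. First I would record the elementary identities $\|X\|_1 = |a|+|b|$, $\|X\|_2^2 = a^2+b^2$, $\tr(X) = a+b$, and, via \autoref{lem:4mIneq} (using that for rank-2 matrices $\tr(X^3) = a^3+b^3$, $\tr(X^4) = a^4+b^4$), the closed form
\begin{equation*}
24\tr\!\left(P_{\mathrm{Sym}^4}X^{\otimes 4}\right) = (a+b)^4 + 8(a+b)(a^3+b^3) + 3(a^2+b^2)^2 + 6(a+b)^2(a^2+b^2) + 6(a^4+b^4).
\end{equation*}
Substituting into the left-hand side of \eqref{eq:4mIneq3} yields a rational function $R(t)$ of the single real variable $t$, where I would also need to split into the cases $\mathrm{sgn}(b) = \pm\mathrm{sgn}(a)$ because of the $|a|+|b|$ in $\|X\|_1$; by the symmetry $t\leftrightarrow 1/t$ and an overall sign flip it suffices to treat $t\in[-1,1]$, further split at $t=0$.

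Next I would locate the maximum of $R(t)$ by calculus: compute $R'(t)$, clear denominators, and find that the numerator of $R'(t)$ is a polynomial whose relevant root on $[-1,1]$ is the one producing the stated bound. I expect the extremal configuration to occur at an interior point giving the algebraic number $\tfrac{5}{81}(95+32\sqrt{10})$; one can verify that this value is attained (not merely a loose bound) by exhibiting the corresponding ratio $t$ explicitly, which should be a quadratic surd solving the critical-point equation. For the traceless addendum \eqref{eq:4mIneq32}, set $b=-a$: then $\tr(X)=0$, $\|X\|_1 = 2|a|$, $\|X\|_2^2 = 2a^2$, the odd-power traces vanish, $\tr(X^4)=2a^4$, and $24\tr(P_{\mathrm{Sym}^4}X^{\otimes 4}) = 3(2a^2)^2 + 6(2a^4) = 24a^4$, so the left-hand side equals $24\cdot 4a^2\cdot a^4 / (2a^2)^3 = 12$ identically, giving the claimed equality directly.

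The main obstacle I anticipate is purely the bookkeeping of the case split (signs of $a$ and $b$) combined with verifying that the critical polynomial has exactly one admissible root and that it genuinely maximizes $R$ rather than being a saddle or minimum — i.e.\ a second-derivative or sign-of-$R'$ check on each subinterval. There is no conceptual difficulty beyond what \autoref{lem:4mIneq} already supplies; the rank-2 assumption is precisely what collapses the trace identities to a two-eigenvalue problem and hence to a one-dimensional optimization, so the proof is a finite, if slightly tedious, computation. I would present it by reducing to $R(t)$, stating $R'(t)$, identifying the maximizing $t$, and evaluating, deferring the routine derivative algebra.
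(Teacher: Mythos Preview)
Your proposal is correct and matches the paper's proof essentially exactly: the paper also normalizes the two eigenvalues to $1$ and $x$ with $|x|\le 1$, writes the ratio as $f(x)=\dfrac{3(1+|x|)^2(1+x+x^2+x^3+x^4)}{(1+x+x^2)^3}$, dispatches $x\ge 0$ trivially (there $f\le 3$), and for $x\in[-1,0)$ locates the maximum at the unique root $x_0$ of the palindromic quartic $4+4x-x^2+4x^3+4x^4$, yielding $\tfrac{5}{81}(95+32\sqrt{10})$; your anticipated quadratic surd for the critical point is borne out by this palindromic structure. The traceless case $x=-1$ gives $f(-1)=12$, exactly as you computed.
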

\begin{proof}
Note that the left hand side of \eqref{eq:4mIneq4} is invariant when $X$ is multiplied by any nonzero real constant. Without loss of generality, we may assume that the two nonzero eigenvalues of  
$X$ are equal to $1,x $ with $-1\leq x\leq 1$. Then \begin{align}
&\|X\|_1=1+|x|, \quad \|X\|_2=1+x^2, \quad  \tr \left( P_{\mathrm{Sym}^4} X^{\otimes 4} \right)=1+x+x^2+x^3+x^4,
\end{align}
so that 
\begin{align}\label{aeq:rank2}
\frac{24\| X \|_1^2\tr \left( P_{\mathrm{Sym}^4} X^{\otimes 4} \right)}{ [\| X \|_2^2 + \tr (X)^2]^3}
=f(x):=\frac{3(1+|x|)^2(1+x+x^2+x^3+x^4) }{(1+x+x^2)^3}.
\end{align}
If  $x\geq0$, then  $f(x)\leq 3$ according to the following equation,
\begin{equation}
(1+|x|)^2(1+x+x^2+x^3+x^4)-(1+x+x^2)^3=-x^2(2+3x+2x^2)\leq 0.
\end{equation}
If  $-1\leq x< 0$, then 
\begin{align*} 
f(x):=\frac{3(1-x)^2(1+x+x^2+x^3+x^4) }{(1+x+x^2)^3},\quad f'(x)=\frac{3(-1+x)(1+x)(4+4x-x^2+4x^3+4x^4)}{(1+x+x^2)^4}.
\end{align*}
Let $x_0$ be the unique real root of  $4+4x-x^2+4x^3+4x^4$ which lies between $-1$ and 0, then $f'(x)\geq 0$ if $-1\leq x\leq x_0$ and $f'(x)\leq 0$ if $x_0\leq x\leq 0$. Therefore, the maximum of $f(x)$ is attained when $x=x_0$, in which case
\begin{equation}
f(x_0)=\frac{5}{81}(95+32\sqrt{10}).
\end{equation}

If $X$ is in addition traceless, then $x=-1$, so \eqref{eq:4mIneq32} follows from \autoref{aeq:rank2}.
\end{proof}

\begin{lemma}\label{lem:4mIneq4}
	Suppose $X$ is a rank-2 Hermitian operator. Then 
	\begin{align}\label{eq:4mIneq4}
\frac{6 \| X \|_1^4 \| X \|_2^2+24\| X \|_1^2\tr \left( P_{\mathrm{Sym}^4} X^{\otimes 4} \right)}{ [\| X \|_2^2 + \tr (X)^2]^3}\leq 36,
	\end{align}
where the upper bound is saturated iff $X$ is traceless.
\end{lemma}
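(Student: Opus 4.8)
The plan is to reduce \eqref{eq:4mIneq4} to a one-variable inequality exactly as in the proof of \autoref{lem:4mIneq3}. The left-hand side of \eqref{eq:4mIneq4} is invariant under $X\mapsto cX$ for any real $c\neq0$ and is symmetric in the two nonzero eigenvalues of $X$, so I may assume these eigenvalues are $1$ and $x$ with $x\in[-1,0)\cup(0,1]$. Then $\|X\|_1=1+|x|$, $\|X\|_2^2=1+x^2$, $\tr(X)=1+x$, $\tr(P_{\mathrm{Sym}^4}X^{\otimes4})=1+x+x^2+x^3+x^4$, and $\|X\|_2^2+\tr(X)^2=2(1+x+x^2)$, so that \eqref{eq:4mIneq4} becomes
\begin{equation*}
6(1+|x|)^4(1+x^2)+24(1+|x|)^2(1+x+x^2+x^3+x^4)\;\le\;288\,(1+x+x^2)^3,
\end{equation*}
and I must show this holds, with equality precisely at $x=-1$.

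For $x\in(0,1]$ I would dispose of the claim by crude bounds: the identity $(1+x)^2(1+x+x^2+x^3+x^4)-(1+x+x^2)^3=-x^2(2+3x+2x^2)\le0$ (already used in the proof of \autoref{lem:4mIneq3}) bounds the second summand by $24(1+x+x^2)^3$, while $(1+x)^4(1+x^2)\le 32<44\le 44(1+x+x^2)^3$ bounds the first by $264(1+x+x^2)^3$; adding these gives a \emph{strict} inequality, so no equality can occur on $(0,1]$.

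The substantive range is $x\in[-1,0)$. Substituting $x=-t$ with $t\in(0,1]$ and expanding, one finds that
\begin{equation*}
P(t):=288(1-t+t^2)^3-6(1+t)^4(1+t^2)-24(1+t)^2(1-t+t^2-t^3+t^4)
\end{equation*}
equals the \emph{palindromic} polynomial $258-912t+1686t^2-2064t^3+1686t^4-912t^5+258t^6$. Writing $s=t+t^{-1}$, this factors as $P(t)=t^3R(s)$ with $R(s)=258s^3-912s^2+912s-240=6(s-2)(43s^2-66s+20)$. Since $t\in(0,1]$ forces $s\ge2$, and since the quadratic $43s^2-66s+20$ has both roots $\tfrac{33\pm\sqrt{229}}{43}\approx0.42,\,1.12$ strictly below $2$, we get $R(s)\ge0$ on $[2,\infty)$ with equality only at $s=2$, i.e.\ at $t=1$. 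Hence $P(t)\ge0$ on $(0,1]$, vanishing precisely at $t=1$, i.e.\ at $x=-1$, which is exactly the traceless case. Combined with the $x\ge0$ case this proves \eqref{eq:4mIneq4} and pins down the equality condition.

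The only delicate point is the expansion producing $P(t)$; once its palindromic symmetry is noticed, the substitution $s=t+t^{-1}$ and the factorization of the resulting cubic make the positivity transparent, so I do not anticipate a genuine obstacle beyond this routine algebra.
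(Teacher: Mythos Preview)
Your proof is correct. The reduction to the eigenvalue parameter $x\in[-1,1]\setminus\{0\}$ is identical to the paper's, and your computations (the expansion of $P(t)$, its palindromic form, the substitution $s=t+t^{-1}$, and the factorization $R(s)=6(s-2)(43s^2-66s+20)$) all check out.

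The only genuine difference is in how the one-variable inequality is finished on the interval $x\in[-1,0)$. The paper simplifies the quotient to
\[
f(x)=\frac{3(1-x)^2(5+2x+6x^2+2x^3+5x^4)}{4(1+x+x^2)^3},
\]
computes $f'(x)=\dfrac{3(-23+9x^2-9x^4+23x^6)}{4(1+x+x^2)^4}$, and observes that the numerator factors as $(x^2-1)(23x^4+14x^2+23)\le 0$, so $f$ is nonincreasing and hence maximized at $x=-1$. Your route instead exploits the palindromic symmetry of $P(t)$ to reduce to a cubic in $s=t+t^{-1}$ and then reads off nonnegativity from the explicit factor $(s-2)$. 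Both arguments are elementary and of comparable length; the paper's derivative approach is perhaps more direct, while your palindromic trick makes the role of the traceless point $t=1$ (equivalently $s=2$) visible as an explicit linear factor rather than as a boundary extremum. For the $x>0$ range the paper simply asserts $f(x)\le 9$, whereas your cruder bound yielding $f(x)<36$ is perfectly adequate for the stated claim.
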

\begin{proof}
As in the proof of \autoref{lem:4mIneq3}, we may assume that the two nonzero eigenvalues of 
$X$ are equal to $1,x $ with $-1\leq x\leq 1$. Then 
\begin{align*}
\frac{6 \| X \|_1^4\| X \|_2^2+24\| X \|_1^2\tr \left( P_{\mathrm{Sym}^4} X^{\otimes 4} \right)}{ [\| X \|_2^2 + \tr (X)^2]^3}
&=f(x):=\frac{6(1+|x|)^4(1+x^2)+24(1+|x|)^2(1+x+x^2+x^3+x^4) }{ 8(1+x+x^2)^3}.
\end{align*}
When $x\geq0$, it is straightforward to verify that $f(x)\leq 9$. When $-1\leq x< 0$,
\begin{align*} f(x)=&\frac{6(1-x)^4(1+x^2)+24(1+x+x^2+x^3+x^4) }{ 8(1+x+x^2)^3}=
\frac{3(1-x)^2(5+2x+6x^2+2x^3+5x^4) }{ 4(1+x+x^2)^3},
\end{align*}
whose derivative is given by 
\begin{align*} f'(x)=&
\frac{3(-23+9x^2-9x^4+23x^6) }{ 4(1+x+x^2)^4}\leq0,
\end{align*}
Therefore, $f(x)\leq f(-1)=36$, and the upper bound is saturated iff $x=-1$, in which case $X$ is traceless. 
\end{proof}

\subsection{Fourth moment implications of \autoref{thm:main_technical} for stabilizer states}

\begin{lemma}
Fix $d=2^n$, let $\phi = | \phi \rangle \! \langle \phi| \in S_d$ be any pure state and $\left\{|x_k \rangle \right\}_{k=1}^N \subseteq \mathbb{C}^d$ denotes the set of all stabilizer states. Then the random variable $S_\phi = \langle x_k | \phi |x_k \rangle$ with probability $\frac{1}{N}$ obeys
\begin{equation*}
\mathbb{E} \left[ S_\phi^4 \right] \leq \frac{30}{(d+4)(d+2)(d+1)d}.
\end{equation*}
\end{lemma}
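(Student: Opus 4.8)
The plan is to apply \autoref{thm:main_technical} directly, exploiting that for a \emph{pure} target state the operator $\phi^{\otimes 4}$ lies entirely in $\mathrm{Sym}^4$, so that the whole computation collapses to a couple of traces against $\phi^{\otimes 4}$. First I would specialize \autoref{thm:main_technical} to a stabilizer-state fiducial. By \eqref{eq:stab_characteristic_function} one has $d\alpha(z)=1$, hence $\alpha(z)=1/d$ and $\beta(z)=\frac{4(1-1/d)}{(d+4)(d-1)}=\frac{4}{d(d+4)}$, so that
\begin{equation*}
\frac{1}{N}\sum_{k=1}^N\bigl(|x_k\rangle\!\langle x_k|\bigr)^{\otimes 4}
=\binom{d+2}{3}^{-1}\Bigl(P_1+\tfrac{4}{d+4}P_2\Bigr).
\end{equation*}

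Next I would compute $\mathbb{E}[S_\phi^4]=\mathrm{tr}\bigl(\tfrac1N\sum_k(|x_k\rangle\!\langle x_k|)^{\otimes4}\,\phi^{\otimes4}\bigr)$ against this expression. Since $\phi$ is pure, $\phi^{\otimes4}$ is the rank-one projector onto $|\phi\rangle^{\otimes4}\in\mathrm{Sym}^4$, so $P_{\mathrm{Sym}^4}\phi^{\otimes4}=\phi^{\otimes4}$ and $\mathrm{tr}(P_{\mathrm{Sym}^4}\phi^{\otimes4})=1$. Using $P_1=P_{\mathrm{Sym}^4}Q$ with $Q=\frac{1}{d^2}\sum_kW_k^{\otimes4}$, the commutation of $Q$ with $P_{\mathrm{Sym}^4}$, and cyclicity of the trace, one gets
\begin{equation*}
\mathrm{tr}(P_1\phi^{\otimes4})=\mathrm{tr}(Q\phi^{\otimes4})=\frac{1}{d^2}\sum_{k=1}^{d^2}\mathrm{tr}(W_k\phi)^4=\alpha(\phi),
\qquad
\mathrm{tr}(P_2\phi^{\otimes4})=1-\alpha(\phi),
\end{equation*}
where the second identity uses $P_2=P_{\mathrm{Sym}^4}-P_1$. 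Substituting yields the exact formula
\begin{equation*}
\mathbb{E}\bigl[S_\phi^4\bigr]
=\binom{d+2}{3}^{-1}\Bigl(\alpha(\phi)+\tfrac{4}{d+4}\bigl(1-\alpha(\phi)\bigr)\Bigr)
=\binom{d+2}{3}^{-1}\Bigl(\tfrac{4}{d+4}+\tfrac{d}{d+4}\,\alpha(\phi)\Bigr).
\end{equation*}

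To finish, I would observe that the right-hand side is increasing in $\alpha(\phi)$, so the bound $\alpha(\phi)\le 1/d$ from \eqref{eq:alpha1_bound} — valid for \emph{any} pure state, in particular the target $\phi$ — gives $\mathbb{E}[S_\phi^4]\le\binom{d+2}{3}^{-1}\bigl(\tfrac{4}{d+4}+\tfrac{1}{d+4}\bigr)=\binom{d+2}{3}^{-1}\tfrac{5}{d+4}$, which equals $\tfrac{30}{(d+4)(d+2)(d+1)d}$ since $\binom{d+2}{3}=\tfrac{d(d+1)(d+2)}{6}$. There is no real obstacle here: the substantive content is \autoref{thm:main_technical}, borrowed from \cite{other_paper}; the only point requiring a little care is the operator-ordering step showing $\mathrm{tr}(P_1\phi^{\otimes4})=\mathrm{tr}(Q\phi^{\otimes4})$, which hinges precisely on $\phi^{\otimes4}$ being supported on the symmetric subspace and on $Q$ commuting with $P_{\mathrm{Sym}^4}$.
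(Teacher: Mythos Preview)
Your proof is correct and follows essentially the same route as the paper: both apply \autoref{thm:main_technical} with the stabilizer fiducial values $\alpha(z)=1/d$, $\beta(z)=4/(d(d+4))$, use that $\phi^{\otimes4}$ lies in $\mathrm{Sym}^4$ to reduce $\mathrm{tr}(P_1\phi^{\otimes4})$ to $\tfrac{1}{d^2}\|\Xi(\phi)\|_{\ell_4}^4=\alpha(\phi)$, and finish with the bound $\alpha(\phi)\le 1/d$ from \eqref{eq:alpha1_bound}. The only cosmetic difference is that the paper first rewrites the fourth-moment operator in the form $(\alpha-\beta)P_1+\beta P_{\mathrm{Sym}^4}$, whereas you work directly with $\alpha P_1+\beta P_2$; the arithmetic is the same.
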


\begin{proof}
The set of all stabilizer states forms a Clifford orbit and \autoref{thm:main_technical} implies
\begin{align}
\mathbb{E} \left[ S_\phi^4 \right] =& \frac{1}{N} \sum_{k=1}^N \langle x_k | \phi |x_k \rangle^4 = \mathrm{tr} \left( \frac{1}{N} \sum_{k=1}^N \left( |x_k \rangle \! \langle x_k | \right)^{\otimes 4} \phi^{\otimes 4} \right) \nonumber\\
= &d \binom{d+2}{3}^{-1} \left( \left( \alpha (z)-\beta (z) \right) \mathrm{tr} \left( Q \phi^{\otimes 4}  P_{\mathrm{Sym}^4} \right) + \beta \mathrm{tr} \left( P_{\mathrm{Sym}^4} \phi^{\otimes 4} \right) \right), \label{eq:app_aux1}
\end{align}
with $Q = \frac{1}{d^2} \sum_{k=1}^{d^2} W_k^{\otimes 4}$ and $\alpha(z),\beta (z) \in \mathbb{R}$ depend on the choice of fiducial. According to Sec.~\ref{sec:fiducials}  we obtain
\begin{equation*}
\alpha (z) = \frac{1}{d} \quad \textrm{and} \quad \beta (z) = 4 \frac{1-\alpha (z)}{(d+4)(d-1)} = \frac{4}{(d+4)d}.
\end{equation*}
for stabilizer state fiducials. This in turn implies
\begin{equation*}
\alpha (z) - \beta (z) = \frac{1}{d} - \frac{4}{(d+4)d} = \frac{1}{d+4}
\end{equation*}
which confirms the upper bound presented in \eqref{eq:alphabound2}, because the difference between $\alpha(z)$ and $\alpha (z)$ is maximal for stabilizer state fiducials.
Moreover, tensor products $\phi^{\otimes k}$ of pure states are always contained in the totally symmetric subspace: $P_{\mathrm{Sym}^4} \phi^{\otimes 4} = \phi^{\otimes 4} P_{\mathrm{Sym}^4} = \phi^{\otimes 4}$.
These relations allow us to considerably simplify \eqref{eq:app_aux1}:
\begin{align}
\mathbb{E} \left[ S_\phi^4 \right]
=& d \binom{d+2}{3}^{-1} \left( \frac{1}{d+4} \mathrm{tr} \left( Q \phi^{\otimes 4} \right) + \frac{4}{(d+4)d} \mathrm{tr} \left( \phi^{\otimes 4} \right) \right) 
= \frac{d}{d+4} \binom{d+2}{3}^{-1} \left( \frac{1}{d^2} \sum_{k=1}^{d^2} \mathrm{tr} \left( W_k \phi \right)^4 + \frac{4}{d} \right) \nonumber\\
=& \frac{6}{(d+4)(d+2)(d+1)} \left( \frac{1}{d^2}\| \Xi ( \phi ) \|_{\ell_4}^4 + \frac{4}{d} \right). \label{eq:app_aux2}
\end{align}
The expression $\| \Xi (\phi) \|_{\ell_4}^4$ is maximized, if $\phi = | \phi \rangle \! \langle \phi|$ is itself a stabilizer state: $\| \Xi (\phi)\|_{\ell_4}^4 \leq d$. Inserting this tight upper bound into \eqref{eq:app_aux2} implies the claim.
\end{proof}

\subsection{Proof of \autoref{lem:effective_rank}}

This statement implies that the effective rank $r_{\mathrm{eff}}(X) = \frac{ \| X \|_1^2}{\| X \|_2^2}$ of any matrix of the form $X = \rho - \frac{1}{d} \mathbb{I}$, $\rho \in \mathcal{S}_d$ is proportional to the rank of $\rho$.
In order to show this, we start by computing the Hilbert-Schmidt norm of $X$:
\begin{align*}
\| X \|_2^2 =& \tr \left( \rho^2 \right) + \frac{1}{d^2} \tr (\mathbb{I}) = \tr \left( \rho^2 \right)-\frac{1}{d}.
\end{align*}
Recall that the minimal purity of any rank-$r$ state $\rho$ is $\mathrm{tr}(\rho^2)=\frac{1}{r}$ which in turn implies
\begin{equation}
\| X \|_2^2 \geq  \frac{d-r}{dr}. \label{eq:effective_rank_aux1}
\end{equation}
For computing the trace norm, we employ an eigenvalue decomposition $\rho = \sum_{k=1}^r \lambda_k |k \rangle \! \langle k|$ of $\rho$ and in turn write $\mathbb{I} = \sum_{k=1}^d |k \rangle\! \langle k|$. Consequently
\begin{align*}
\| X \|_1 = \sum_{k=1}^r \left| \lambda_k - \frac{1}{d} \right| + \sum_{k=r+1}^d \frac{1}{d} 
\leq \sqrt{r \sum_{k=1}^r \left( \lambda_k - \frac{1}{d} \right)^2}+\frac{d+r}{d},
\end{align*}
because $\| x \|_{\ell_1} \leq \sqrt{r} \| x \|_{\ell_2}$ for any $x \in \mathbb{C}^r$. Applying $\sum_{k=1}^r \lambda_k^2 = \tr (\rho^2)$, $\sum_{k=1}^r \lambda_k = \tr (\rho) =1$ and resorting to \eqref{eq:effective_rank_aux1} we obtain
\begin{align*}
\| X \|_1 \leq & \sqrt{r \sum_{k=1}^r \left( \lambda_k - \frac{1}{d} \right)^2}+\frac{d+r}{d} 
= \sqrt{ r \left( \tr (\rho^2) - \frac{1}{d} - \frac{d-r}{d} \right)}+\frac{d-r}{d^2}  \\
=& \sqrt{ r \left( \|X \|_2^2 - \frac{r}{d}\frac{d-r}{rd} \right)} + \sqrt{r} \sqrt{\frac{d-r}{d}} \sqrt{\frac{d-r}{dr}} 
\leq  \sqrt{r \left( 1- \frac{r}{d} \right) \| X \|_2^2} + \sqrt{r \frac{d-r}{d}} \| X \|_2 \\
=& 2 \sqrt{r\frac{d-r}{d}}\| X\|_2.
\end{align*}
Combining these two relations implies
\begin{equation*}
r_{\mathrm{eff}}(X) = \frac{ \| X\|_1^2}{\|X\|_2^2} = \frac{4r(d-r)}{d},
\end{equation*}
as claimed. The second bound follows from the fact that $\max \left\{r,d-r \right\}\leq \frac{d-1}{d}\leq d-1$ for any $1 \leq r \leq d-1$ (the case $r=d$ is trivial, because it implies $X=0$). Consequently:
\begin{align*}
\frac{4r(d-r)}{d} =& \frac{4}{d} \max \left\{ r, d-r \right\} \min \left\{ r,d-r \right\} 
\leq 4\frac{d-1}{d} \min \left\{ r, d-r \right\}.
\end{align*}
The fact that both bounds are saturated, follows from a straightforward computation for $\rho = \sum_{k=1}^r |k \rangle \! \langle k|$ (first bound) and then setting $r=1$ and $r=d-1$, respectively (second bound).

\end{widetext}
\end{document}